\documentclass[aoas]{imsart}

\RequirePackage{amsthm,amsmath,amsfonts,amssymb}
\RequirePackage[authoryear]{natbib}
\RequirePackage{bm}
\RequirePackage{mathtools}
\RequirePackage{graphicx}
\RequirePackage{url}

\startlocaldefs
\numberwithin{equation}{section}

\theoremstyle{plain}
\newtheorem{theorem}{Theorem}[section]
\newtheorem{prop}[theorem]{Proposition}

\theoremstyle{definition}

\endlocaldefs

\urldef{\FSDGitHubURL}\url{https://github.com/pk2393/FSD_calc_GWP}
\urldef{\FSDZenodoURL}\url{https://doi.org/10.5281/zenodo.20446753}

\begin{document}

\begin{frontmatter}
\title{Computing the final epidemic size distributions of a multi-type Galton--Watson process}
\runtitle{Final size distributions for multi-type GWPs}

\begin{aug}
\author[A]{\fnms{Yuta}~\snm{Okada}\ead[label=e1]{okada.yuta.4y@kyoto-u.ac.jp}}
\and
\author[A]{\fnms{Hiroshi}~\snm{Nishiura}\ead[label=e2]{nishiura.hiroshi.5r@kyoto-u.ac.jp}}

\address[A]{School of Public Health, Graduate School of Medicine, Kyoto University\printead[presep={,\ }]{e1,e2}}
\end{aug}

\begin{abstract}
The Galton--Watson process (GWP)  is a discrete-time branching process model that provides a powerful tool for analyzing epidemic data and estimating key epidemiological parameters such as the basic reproduction number. When used with surveillance-based cluster size data, the GWP can also elicit information about the extent of transmission heterogeneity, even when each transmission process is not directly observable. 
When cluster size distribution data are available, the parameters that govern the transmission can be statistically inferred by using the probability mass function that corresponds to the observed cluster size data. 
For multi-type GWPs, however, real-world applications remain limited, possibly because of the absence of conceptually and practically straightforward approaches for deriving the closed-form solution of the final size distribution. 
In the present study, we propose a framework for computing the final size distribution of multi-type GWPs, using a method for the choice of the Cauchy integral contour. 
We provide examples of how our framework can be applied to both simulated data and real-world data of Middle East respiratory syndrome, and discuss potential pitfalls surrounding the identifiability of parameters for statistical inference when using likelihoods that are not conditioned on extinction.
\end{abstract}

\begin{keyword}
\kwd{Multi-type branching process}
\kwd{Galton--Watson process}
\kwd{Epidemics}
\kwd{Final size}
\kwd{Cauchy integral}
\end{keyword}
\end{frontmatter}


\section{Introduction}
\label{sec:intro}

Stochastic epidemic models have been studied both as a mathematical description of the epidemic process and as a robust tool for statistical inference. 
Depending on the scientific question, setting of interest, or the availability of data, these models have been applied not only to analysis of the real-time transmission dynamics, but also to the demonstration of the heterogeneous nature of transmission and the determination of the eventual size of epidemics. 
The eventual (or asymptotic) distribution of the risk of infection is a key focus of research because of its practical importance from a public health perspective. 
Numerous studies have explored the theory surrounding the basic reproduction number and asymptotic state of epidemics as a stochastic process in a closed population with a social or network structure 
\citep{Ball1986TotSize, Lefevre1990ReedFrost, Scalia-Tomba1990AsymptoticFS_hetero, Pellis2012-mk, Allard2009MultitypePercol, Kenah2007NetworkEpidemicsSecondLook}. 
Considering an infinitely large closed population, which may be valid during the early epidemic period, the epidemic process is well approximated by branching process models, the theoretical properties of which have been extensively studied \citep{harris1963theory, athreya1972branching}. 
The Galton--Watson process (GWP), which is a discrete-time, memory-less branching process, has been applied to epidemic cluster size distribution data for the statistical inference of key epidemiological quantities including the basic reproduction number and the degree of transmission heterogeneity (e.g., the dispersion parameter of the negative binomial distribution)  
\citep{Gay2004-ib, blumberg2013ChainSizeCompare,blumberg2013Stuttering,Blumberg2014-um,nishiura2015MERSeuro,kucharski2015MERSsuperspread,kucharski2015PCB,endo2020COVID, trankiem2024PNAS, hodcroft2025COVIDseqclust}. 

The application of single-type GWPs to epidemic cluster size data is facilitated by the fact that the explicit form of the final size distribution for a negative binomial offspring is known and can be directly used for likelihood-based statistical inference. 
This use of GWPs has contributed to the statistical inference of epidemiological parameters, providing a foundation for evaluating the extinction probability of epidemics. Recent studies have proposed approaches for extending this conventional extinction probability to the real-time prediction and forecasting of epidemics 
\citep{Lee2019ExtProbEbolaSexual, bradbury2023EndOfOutbreak, hart2024EndOfOutbreak}. 
In reality, however, transmission depends on host attributes such as age, susceptibility, and contact behavior; 
therefore, beyond the single-type GWP, a multi-type specification becomes more desirable for the design of targeted (i.e., risk-based) public health interventions, such as age- or host-specific countermeasures. 

While several studies have focused on the eventual extinction or containment probabilities for multi-type branching processes \citep{Spencer2011HHcontain},  
there has been limited research on the practical application of the likelihood-based inference of epidemiological parameters by calculating the final sizes for multi-type GWPs. 
Existing methods for calculating the final size distribution fall into the following categories: 
\begin{enumerate}
    \item Symbolic iteration of the final size probability generating functions (PGFs) or algebraic evaluation of the final size PGF \citep{Brummitt2012CascadeFSD, Qi2017Cascade_symbolic}.
    \item Tree enumeration \citep{kucharski2015PCB, chaumont2015MultitypeForests}.
    \item Numerical calculation of the Cauchy integral representation of the coefficients of the final size PGF~\citep{Brummitt2012CascadeFSD}. 
\end{enumerate}
Regarding these categories, approach 1 is difficult to trace for general PGFs. 
Approach 2 has been successfully applied to epidemiological data \citep{kucharski2015PCB} 
by assuming independent negative binomial distributions for all transmission pathways between the types of hosts and applying the known approach of Chaumont and Liu \citep{chaumont2015MultitypeForests}.
One potential bottleneck of this approach is that the calculation of probabilities for each potential transmission pathway is not straightforward for general offspring distributions. 
Approach 3 is straightforward because it reduces to calculating the Cauchy integral representation of the PGF coefficients; this method can be applied to general forms of PGFs, which in itself is a widely used technique 
(in an epidemiological context, see \citet{Miller2018PGFprimer, Roberts2023EthicalControl_sciadv}). 
To the best of our knowledge, however, there has been no explicit description of a theory-backed criterion for choosing contours that ensures the validity of the Cauchy integral calculation regarding final size distributions. A discussion of this problem is given in Section \ref{sec:gwp_fs_intro}. 
This issue was the motivation for the present study.  
Bearing in mind that the use of likelihoods conditioned on extinction makes it impossible to identify the criticality of the underlying process~\citep{Farrington2003,Waxman2019CondUncond,Jagers2008PGFcondtioned}, 
we believe that an inference framework using the unconditional likelihood, which allows for seamless handling of the sub- and supercritical regimes, will prove to be useful. This is especially true when considering the uncertainty regarding criticality,
which is a key factor in the context of public health risk assessment, particularly during the early epidemic period. 

Given these backgrounds, the major focus of this paper is to propose a straightforward method for calculating the final size distribution of multi-type GWPs, 
which relies on the Cauchy integral representation of the coefficients of the PGF of the final size distribution. 
The remainder of this paper is organized as follows. 
Section \ref{sec:gwp_fs_intro} provides a brief introduction to the final size PGF of GWPs, and summarizes two key approaches for determining the final size PGFs based on a Cauchy integral calculation. 
We then present our contributions, which are threefold. 
First, Section \ref{sec:fse_prop} establishes a contour-selection result that guarantees the existence and uniqueness of the solution to the multi-type final size fixed-point equation on the chosen contours, 
thereby providing a theoretical justification for directly calculating the final size PGF coefficient. 
Second, Section \ref{sec:implementation} presents a practical numerical procedure based on standard optimization and the discrete Fourier transformation. 
Third, Sections \ref{sec:demo} and \ref{sec:mers} provide illustrative examples of the final size calculation and its implementation in likelihood-based inference using simulated and real multi-type cluster size data. 
Section \ref{sec:identifiability} highlights the potential mechanism that may cause the trade-off between mean transmission and dispersion in the supercritical domain, as observed in the results in Section \ref{sec:mers}.

\section{Galton--Watson process and its final size}
\label{sec:gwp_fs_intro}
In this section, we briefly describe some general results regarding single- and multi-type GWPs, 
and the final size distributions that are expressed by functional equations of PGFs. 

\subsection{Single-type GWP}
For a single-type GWP, the PGF of the underlying offspring distribution is generally written as
\[
G(z)=\mathbb{E}[z^X]=\sum_{d\ge 0} p(d) z^d, \qquad |z|\le 1.
\]
Generally, a GWP is classified by criticality according to its mean number of offspring: subcritical if $\frac{dG}{dz}(1)<1$, critical if $\frac{dG}{dz}(1)=1$, and supercritical if $\frac{dG}{dz}(1)>1$ 
\citep{harris1963theory, athreya1972branching}. 
Given $G(z)$, the PGF of the final size (including the initial ancestor) $H(z)$ is \citep{harris1963theory}
\begin{equation}
H(z)=z\,G(H(z)), \qquad |z|\le 1,
\label{eq:fse_single_type}
\end{equation}
which is also a power series of $z$. 
As mentioned in Section~\ref{sec:intro}, the explicit form of $H(z)$ is known for negative binomial offspring in single-type GWPs, and has been applied in several studies 
\citep{nishiura2012Supercritical, blumberg2013ChainSizeCompare, blumberg2013Stuttering, Blumberg2014-um, endo2020COVID, trankiem2024PNAS,hodcroft2025COVIDseqclust}. 

\subsection{Multi-type GWP}
Analogously to Eq. \eqref{eq:fse_single_type},  
the PGF of the offspring originating from type $i$ ($i=1,2,\ldots,n$) of a multi-type GWP is generally written as 
\begin{equation}
G^{(i)}(\bm{z})
=\mathbb{E}\!\left[\prod_{j}{z_j^{X_{ij}}}\right]
=\sum_{d_1,\ldots,d_n} p^{(i)}(d_1,\ldots,d_n) z_1^{d_1} z_2^{d_2} \cdots z_n^{d_n}, 
\label{eq:Gi_def}
\end{equation}
where $\bm{z}=(z_1,z_2,\ldots,z_n)^\top\in\mathbb{C}^n$. 
The criticality of multi-type GWPs is defined by the spectral radius of the Jacobian matrix of $\bm{G}(\bm{z})=(G^{(1)}(\bm{z}),\ldots,G^{(n)}(\bm{z}))^\top$ at $\bm{z}=\bm{1}$, or 
\[
\bm K = \bm J_{\bm G}(\bm 1) =  \left[\frac{\partial G^{(i)}}{\partial z_j} \right]_{\bm z = \bm 1}.  
\]
(In an epidemiological context, the general notation of next-generation matrices corresponds to $\bm K^{T}$.) 
The final size equation for the PGF of the GWP originating from a ``parent'' of type $i$ is  \citep{harris1963theory}
\begin{equation}
H^{(i)}(\bm{z})
= z_i\,G^{(i)}\!\left(H^{(1)}(\bm{z}), \ldots, H^{(n)}(\bm{z})\right),
\qquad i=1,\ldots,n,
\label{eq:fse_multi_i}
\end{equation}
or  
\begin{equation}
\bm{H}(\bm{z}) = \begin{pmatrix} H^{(1)}(\bm{z}) \\ \vdots \\ H^{(n)}(\bm{z}) \end{pmatrix}
= \begin{pmatrix} z_{1} & \cdots & 0 \\ \vdots & \ddots & \vdots \\ 0 & \cdots & z_{n} \end{pmatrix}
\begin{pmatrix} G^{(1)}(\bm{H}(\bm{z})) \\ \vdots \\ G^{(n)}(\bm{H}(\bm{z})) \end{pmatrix}
= \operatorname{diag}(\bm{z})\bm{G}(\bm{H}(\bm{z})), 
\label{eq:fse_multi}
\end{equation}
which we hereafter refer to as the ``final size equation''.

\subsection{Cauchy integral-based calculation of final size distribution }
The coefficients of $H^{(i)}$ in Eq. \eqref{eq:fse_multi_i} can be written as 
\begin{equation}
\label{eq:LGformula}
\begin{split}
f^{(i)}(d_{1}, \dots, d_{n})
 &= \frac{1}{(2\pi \mathrm{i})^{n}} \oint_{\Gamma_{\bm{z}}} \frac{H^{(i)}(z_{1}, \dots, z_{n})}{z_{1}^{d_{1}+1} \dots z_{n}^{d_{n}+1}} \, dz_{1} \dots dz_{n} \\
 &= \frac{1}{(2\pi \mathrm{i})^{n}} \oint_{\Gamma_{\bm{h}}} h_{i}\left[ \prod_{m} \left( \frac{G^{(m)}(h_{1}, \dots, h_{n})}{h_{m}} \right)^{d_{m}+1} \right] \det(\bm{J}_{\bm z \rightarrow \bm h}(\bm{h})) \, dh_{1} \dots dh_{n},
\end{split}
\end{equation}
where $\Gamma_{\bm{z}}=C_{z_{1}} \times C_{z_{2}} \times \ldots \times C_{z_{n}}$ and $\Gamma_{\bm{h}}= C_{h_{1}} \times C_{h_{2}} \times \ldots \times C_{h_{n}}$ are contours with respect to complex variables $\bm{z}=(z_{1},z_{2},...,z_{n})$ and $\bm{h}=(h_{1},h_{2},....,h_{n})$, respectively. 
The latter Cauchy integral is the Lagrange--Good formula \citep{good1960LGgeneral}, which involves the change of variables from $\bm{z}$ to $\bm{h}$; 
the term $\det(\bm{J}_{\bm z \rightarrow \bm h}(\bm{h}))$ in the integrand is the determinant of the Jacobian matrix:
\begin{equation}
\bm{J}_{\bm z \rightarrow \bm h}(\bm{h})= \left[ \frac{\partial z_{i}}{\partial h_{j}} \right],  
\quad \frac{\partial z_{i}}{\partial h_{j}}=\frac{1}{G^{(i)}(\bm{h})}\left(\delta_{i,j}-\frac{\partial G^{(i)}}{\partial h_{j}}\frac{h_{i}}{G^{(i)}}\right),
\end{equation}
where $\delta_{i,j}$ is the Kronecker delta. 

An obvious advantage of calculating the final size distribution based on the Cauchy integral method in Eq. \eqref{eq:LGformula} is that the conceptual and practical representation is mathematically very straightforward. 
However, a criterion for valid choices of contours for the Cauchy integral calculation needs to be clarified. 
Namely, regarding the first integral of Eq. \eqref{eq:LGformula}, 
the existence and uniqueness of $H^{(i)}(z_{1}, \dots, z_{n})$ in the integrand do not readily follow, especially for supercritical processes: 
this is the key focus of the present study, and relevant analysis will follow. 
Regarding the second integral, the regularity of $\bm{J}(\bm{h})$ in the integrand should be guaranteed inside the contours, although there is only a general principle to choose ``sufficiently small'' radii for contours. 
Therefore, in Section \ref{sec:implementation}, where this Lagrange--Good formula-based approach (LG-approach) will be used, we resort to a numerical approach to ensure applicability.

\section{Solution to the final size equation for $\bm{z} \in \mathbb{C}^n$}
\label{sec:fse_prop}

To evaluate the Cauchy integral representation of the final size PGF coefficient, 
we require a unique solution to Eq. \eqref{eq:fse_multi} for every $\bm{z}$ on the contour $\Gamma_{\bm{z}}$. 
The following proposition shows the existence of such a contour.  
\begin{prop}
\label{prop:existence}
Let $\bm{G}(\bm{z} )=(G^{(1)}(\bm{z}),...,G^{(n)}(\bm{z}))^{T}$ be the offspring PGF vector
of an $n$-type GWP defined for $|z_{i}|\le1$, $i=1,2,...,n$, such that
$\bm{J}_{\bm{G}}(\bm{r})$ has finite elements and is nonnegative and irreducible for $\bm r \in (0, 1]^n$.
Then, there exists $\bm{r}^{\#}=(r_{1}^{\#},...,r_{n}^{\#}) \in [0,1]^n$ such that, for every complex
vector $\bm{z}$ inside the polydisk $D_{\bm{z}}:|z_{i}|<r_{i}^{\#}$, $i=1,...,n$, the functional equation
\begin{equation}
\label{eq:fse_Tz}
\bm{h} =  \operatorname{diag}(\bm{z})\bm{G}(\bm{h}) \eqqcolon \bm{T}_{\bm{z}}(\bm{h})
\end{equation}
has a unique solution $\bm{h}\in\mathbb{C}^{n}$ with $|\bm {h}|\leq |\bm z| <\bm{r}^{\#}$.
\end{prop}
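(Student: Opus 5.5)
We will prove Proposition~\ref{prop:existence} with the contraction mapping principle, applied to $\bm{T}_{\bm z}$ on a closed polydisk. The radii $\bm r^{\#}$ will come from a Perron--Frobenius argument on the nonnegative irreducible matrix $\bm J_{\bm G}(\bm r)$.

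\textbf{Choice of $\bm r^{\#}$.} For $\bm r\in(0,1]^n$, write $\bm K(\bm r)=\bm J_{\bm G}(\bm r)$. Since the coefficients $p^{(i)}(\cdot)$ are nonnegative, every entry of $\bm K(\bm r)$ is nondecreasing in each $r_j$. Consider the matrix $\operatorname{diag}(\bm r)\bm K(\bm r)$. It is nonnegative and irreducible, and its entries tend to $0$ as $\bm r\to\bm 0$ along the ray $\bm r=t\bm v$, with $\bm v\in(0,1]^n$ fixed and $t\downarrow 0$; this uses the monotone bound $\bm K(t\bm v)\le \bm K(\bm v)$. Hence its spectral radius $\rho(t)\to0$. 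Fix $t$ small enough that $\rho(t)<1$, and set $\bm r^{\#}=t\bm v$. By Perron--Frobenius there is a strictly positive left eigenvector $\bm w$ of $\operatorname{diag}(\bm r^{\#})\bm K(\bm r^{\#})$. The weighted norm $\|\bm x\|_{\bm w}=\sum_i w_i|x_i|$ then satisfies $\|\operatorname{diag}(\bm r^{\#})\bm K(\bm r^{\#})\bm x\|_{\bm w}\le\rho\,\|\bm x\|_{\bm w}$ for $\bm x\ge0$.

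\textbf{Invariance and contraction.} Fix $\bm z$ with $|z_i|<r_i^{\#}$, and let $B=\{\bm h:|h_i|\le |z_i|\}$. Since $|G^{(i)}(\bm h)|\le G^{(i)}(|\bm h|)\le G^{(i)}(\bm 1)=1$, we get $|T_{\bm z}(\bm h)_i|\le|z_i|$, so $\bm T_{\bm z}(B)\subseteq B$. To show contraction, take $\bm h,\bm h'\in B$ and integrate $\bm G$ along the segment joining them. Since $\bm J_{\bm G}$ has nonnegative coefficients, the triangle inequality gives
\[
|T_{\bm z}(\bm h)-T_{\bm z}(\bm h')|\le \operatorname{diag}(|\bm z|)\,\bm K(|\bm z|)\,|\bm h-\bm h'|\le \operatorname{diag}(\bm r^{\#})\bm K(\bm r^{\#})|\bm h-\bm h'|
\]
componentwise. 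This uses $|\partial_j G^{(i)}(\bm\xi)|\le \partial_jG^{(i)}(|\bm\xi|)$ and monotonicity, since the segment stays in $B$. Taking $\bm w$-norms yields $\|\bm T_{\bm z}\bm h-\bm T_{\bm z}\bm h'\|_{\bm w}\le\rho\|\bm h-\bm h'\|_{\bm w}$ with $\rho<1$. Banach's theorem on the complete set $B$ then gives existence of a fixed point and uniqueness within $B$, that is, among $\bm h$ with $|\bm h|\le|\bm z|$, which is exactly what the proposition claims. Alternatively, we can impose the condition $\rho(\operatorname{diag}(\bm r)\bm K(\bm r))<1$ on all of $D_{\bm z}$ and compare directly.

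\textbf{Main obstacle.} The delicate step is choosing $\bm r^{\#}$ so that the Perron root is below one uniformly over the polydisk. This means justifying continuity and monotonicity of $\rho(\operatorname{diag}(\bm r)\bm K(\bm r))$ in $\bm r$; monotonicity follows from entrywise monotonicity of nonnegative matrices. It also means handling possibly unbounded $\partial_jG^{(i)}$ near the boundary, which the finiteness hypothesis on $(0,1]^n$ covers. One could enlarge $\bm r^{\#}$ to the maximal set where $\rho<1$, and this is useful in the supercritical case where $\bm r^{\#}=\bm 1$ fails. The proposition only needs existence, so a small $t$ suffices. Degenerate boundary cases, where some $r_i^{\#}=0$ is allowed by $[0,1]^n$, are unnecessary since we choose all $r_i^{\#}>0$.
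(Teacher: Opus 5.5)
Your proof is correct, and its core is the same as the paper's. Both apply Banach's fixed-point theorem to $\bm T_{\bm z}$ on $\{\bm h : |\bm h|\le|\bm z|\}$. Both obtain the Lipschitz bound by integrating $\bm J_{\bm G}$ along the segment, majorize it by $\operatorname{diag}(\bm r^{\#})\bm J_{\bm G}(\bm r^{\#})$ using the nonnegative coefficients and entrywise monotonicity, and measure the result in a Perron-weighted norm. The paper uses the weighted sup-norm $\|\cdot\|_{\infty,\bm x}$ with the right Perron vector $\bm v(\bm r^{\#})$, for which $\|\bm M(\bm r^{\#})\|_{\infty,\bm v}=\rho(\bm M(\bm r^{\#}))$. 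Your weighted $\ell^1$ norm with the left Perron vector is the dual choice and works identically.

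The genuine difference is how $\bm r^{\#}$ is produced. You scale down along a ray. This proves the bare existence claim without reference to criticality, and it needs only monotonicity and homogeneity of the spectral radius; no continuity argument is required, despite your closing remark. The paper instead splits by criticality. In the subcritical case, radii up to $\bm 1$ work; in the critical case, radii up to $(1-\epsilon)\bm 1$ work. In the supercritical case, the paper uses $\rho(\bm J_{\bm G}(\bm q))<1$ and the intermediate value theorem along $\bm q+t(\bm 1-\bm q)$ to obtain admissible radii with $r^{\#}_i>q_i$.

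That extra content is what the rest of the paper relies on. Section~\ref{sec:implementation} takes $\bm r^{\#}=\bm q+t^{\#}(\bm 1-\bm q)$ with $\rho(\bm J_{\bm G}(\bm r^{\#}))=0.975$, because large radii limit the rounding errors amplified by the factor $r^{-d}$ in the DFT formula. The Discussion likewise stresses that valid radii can exceed $\bm q$.

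The bound you actually use along $\bm v=\bm 1$, namely $\rho\le t\,\rho(\bm J_{\bm G}(\bm 1))$, certifies only $t<1/\rho(\bm J_{\bm G}(\bm 1))$. This can lie well below $q_i$ for overdispersed supercritical offspring. For example, single-type negative binomial offspring with $R=2$ and $k=0.1$ has $q\approx 0.89$, while your bound certifies only $t<0.5$. So the enlargement of $\bm r^{\#}$ you mention at the end would require precisely the paper's supercritical-case argument.
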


\begin{proof}[Proof sketch]
Let $\bm r=|\bm z|$. For $\bm h,\bm h'\in\mathbb C^n$ such that $\bm h \neq \bm h'$ and  $|\bm h|, |\bm h'|\le \bm r$, 
for any $\bm {r}\in(0,1)^n$, we may choose a positive vector $\bm x=\bm {x}(\bm r)$ such that the following relationship holds for the infinity norms weighted by $\bm {x}$:
\[
\|\bm T_{\bm z}(\bm h')-\bm T_{\bm z}(\bm h)\|_{\infty,\bm x}
\;\le\;
\|\operatorname{diag} (\bm r) \bm J_{\bm G}(\bm r) \|_{\infty, \bm x} \|\bm h'-\bm h\|_{\infty,\bm x},
\]
which can be seen from the fact that $\bm J_{\bm G}(\bm z)$ 
(the Jacobian matrix of $\bm G$ with respect to $\bm z$) 
has nonnegative power-series coefficients, and hence $\|\bm J_{\bm G}(\bm z)\|_{\infty,\bm x}\le \|\bm J_{\bm G}(|\bm z|)\|_{\infty,\bm x}=\|\bm J_{\bm G}(\bm r)\|_{\infty,\bm x}$.

Then, there exists $\bm r^{\#}\in(0,1)^n$ and a positive weight vector $\bm x^{\#} = \bm{x}(\bm r^\#)$ such that $\|\operatorname{diag} (\bm r^\#) \bm J_{\bm G}(\bm r^\#) \|_{\infty, \bm x^\#} <1$. 
Briefly, for subcritical and critical processes, any $\bm r^{\#}\in(0,1-\epsilon)^n$ satisfies this condition. 
For supercritical processes, there exists $\bm r^{\#}$ such that  $r_i^\#\in(q_i,1), i = 1, \ldots, n$, where $\bm q = (q_1, \ldots, q_n)$ is the extinction probability vector, or the smallest positive root of $\bm q = \bm G(\bm q)$. 

Consequently, regardless of the criticality of the underlying GWP, 
we may choose contours inside a polydisk $D_{\bm{z}}:|z_{i}|<r_{i}^{\#}$ such that $\bm T_{\bm z}$ becomes a contraction mapping. 
This ensures not only the existence, but also the uniqueness of the solution to Eq.~\eqref{eq:fse_Tz}. 
(A full proof is provided in Section 1 of the Supplementary Material.) 
\end{proof}
The discussion in Proposition \ref{prop:existence} essentially leads to a determination of the region of $\bm z$ in which the derivatives of $\bm H (\bm z)$ are finite 
(see Section 2 of the Supplementary Material).  

Upon the practical implementation of Proposition \ref{prop:existence}, 
it is sufficient to ensure the condition $\rho(\operatorname{diag} (\bm r^\#) \bm J_{\bm G}(\bm r^\#))<1$. 
However, for simplicity, the following sections adopt the more conservative condition that $\rho(\bm{J}_{\bm{G}}(\bm{r^\#}))<1$ for supercritical cases. 

\section{Numerical framework for calculating final size distributions}
\label{sec:implementation}
In this section, we briefly describe the numerical framework used to compute final size distributions. 
The framework itself is a combination of standard methods for calculating Cauchy integrals and for optimization, 
with adaptations based on what was discussed in Section \ref{sec:fse_prop}.

\subsection{Choice of Cauchy integral contours}
The calculation of PGF coefficients using Cauchy integrals has previously been described in an epidemiological context 
\citep{Miller2018PGFprimer, Roberts2023EthicalControl_sciadv}; 
we apply an analogous framework to multivariate settings.  
The computation of the coefficients of the final size PGF is based on a multivariate Cauchy integral:
\begin{equation}
\label{eq:coef_cauchy}
f^{(i)}(d_{1}, \dots, d_{n})= \frac{1}{(2\pi \mathrm{i})^{n}} \oint_{\Gamma_{\bm{z}}} \frac{H^{(i)}(z_{1}, \dots, z_{n})}{z_{1}^{d_{1}+1} \dots z_{n}^{d_{n}+1}} \, dz_{1} \dots dz_{n},
\end{equation}
which we evaluate on multidimensional grids along the contour $\Gamma_{\bm{z}}=C_{z_{1}} \times \dots \times C_{z_{n}}$ using the discrete Fourier transform (DFT) \citep{trefethen2014trapezoidal}. 
Here, each $C_{z_{j}}$ is a circle defined by $|z_j| = r_j$.

The critical step enabled by Proposition \ref{prop:existence} is the principled choice of the contour radii $\bm{r}=(r_1, \dots, r_n)$ to guarantee the existence and uniqueness of the integrand. 
To balance this theoretical requirement with general recommendations to minimize rounding and aliasing errors \citep{bornemann2011cauchy, AbateWhitt1992QS}, we construct the contour as follows:
\begin{itemize}
    \item Subcritical or critical: $\bm{r} = (0.95, \ldots, 0.95)$.
    \item Supercritical: We first find a vector $\bm{r}^{\#}=\bm{q}+t^{\#}(\bm{1}-\bm{q})$ with $t^{\#}\in(0,1)$ such that $\rho(\bm{J}_{\bm{G}}(\bm{r}^{\#}))=0.975$, which is a conservative threshold chosen to ensure a certain margin from the theoretical threshold. We then set $r_j = \min(r_j^{\#}, 0.95)$.
\end{itemize}
(The detailed formulation of the DFT approximation and the rule for determining the optimal grid size $N$ are described in Section 3.1 of the Supplementary Material.)

\subsection{Solving the final size equation on the contour}
To evaluate the numerator of the integrand, i.e., $H^{(i)}(\bm{z})$ in Eq. \eqref{eq:coef_cauchy}, at each grid point via DFT, we must find the unique solution $\bm{h}$ to Eq. \eqref{eq:fse_Tz} for every $\bm{z}$ on $\Gamma_{\bm{z}}$. This is equivalent to finding roots of the system
\begin{equation}
\bm{F}_{\bm{z}}(\bm{h})=\bm{h}-\operatorname{diag}(\bm{z})\bm{G}(\bm{h})=\bm{0}.
\end{equation}
We solve this equation as a minimization problem of $\psi_{\bm{z}}(\bm{h})= \| \bm{F}_{\bm{z}}(\bm{h}) \|^{2}_{2} $, 
or the product of $\bm{F}_{\bm{z}}(\bm{h})$ and its conjugate transpose, by using a numerical approach based on Newton's method with backtracking 
\citep{nocedal2006Optimization}. 
Proposition \ref{prop:existence} ensures both the existence of a unique solution and the regularity of the Jacobian matrix $\bm{J}_{\bm{F}_{\bm{z}}}(\bm{h})$ within the region $|\bm{h}| \le \bm{r}^{\#}$, 
thus guaranteeing the convergence of an arbitrary Newton sequence starting from $|\bm{h}| \le \bm{r}^{\#}$. 
(Details are provided in Section 3.2 of the Supplementary Material.)

Once the unique solution $\bm{h}$ has been computed for all grids for $\bm z$ on the contour, 
these shared values can be efficiently reused to calculate the entire probability mass matrix. 

\section{Case study on simulated data}
\label{sec:demo}
In this section, we illustrate the application of our framework on hypothetical offspring distributions and simulated datasets. 
As an example, we assume a two-type GWP with negative multinomial offspring distributions. 
\subsection{Comparison with the Lagrange--Good method}
First, to compare the performance of the final size PGF calculation given by the LG approach with that of ours, 
we build the underlying $2\times 2$ mean offspring matrix $\bm K= \left[ \frac{\partial G^{(i)}}{\partial z_j} \right]_{\bm z = \bm 1}$  using three scalars $(R,\alpha,p)$, where $R>0$ and $\alpha, p \in [0,1]$. 
We first define the skeleton matrix as
\begin{equation}
\bm{B}(\alpha,p)=
\begin{pmatrix}
\alpha\, p &\alpha(1-p) \\
(1-\alpha)(1-p) & (1-\alpha)p
\end{pmatrix},
\label{eq:demo1_B}
\end{equation}
in which the row sums are $\alpha$ and $1-\alpha$, respectively. 
Here, $\alpha$ controls the relative expected number of secondary cases generated by type 1 versus type 2 primary cases, whereas $p$ controls the within-type transmission probabilities for the two types. 
We then scale $\bm{B}$ so that the spectral radius of $\bm{K}$ is equal to the reproduction number $R$:
\begin{equation}
\bm{K}= R\frac{\bm{B}}{\rho(\bm{B})}=\begin{pmatrix} K_{11} & K_{12} \\ K_{21} & K_{22} \end{pmatrix},
\label{eq:demo1_K}
\end{equation}
where $\rho(\cdot)$ denotes the spectral radius of an arbitrary matrix. We model the offspring vector using a negative multinomial PGF:
\begin{equation}
G^{(i)}(z_1,z_2;\bm{\theta})=
\left(1+\frac{1}{k}\sum_{j=1}^2 K_{ij} \left(1- z_j\right)\right)^{-k},
\qquad i=1,2,
\label{eq:demo1_nm_pgf}
\end{equation}
where $k>0$ is the dispersion parameter and $\bm{\theta}=(R,\alpha,p,k)$.  
Other than the choice of contour radii, the calculation of the PGF coefficients using this LG approach proceeds in the same manner as described in Section \ref{sec:implementation}. 
(Details on the choice of contour radii for the LG-based approach are provided in Section 4 of the Supplementary Material.) 
\begin{table}[t]
\centering
\caption{Numerical comparison between the proposed method and the Lagrange--Good (LG) implementation (max degree: $40$ for each type).}
\label{tab:comparison_K40_symmetric}
\begin{tabular}{ccccccccc}
\hline
$R$ & $k$ & $\alpha$ & $p$ & index $i$ & $Q_i$ & sum(LG) & sum(current) & max abs diff \\
\hline
0.50 & 0.1 & 0.25 & 0.25 & 1 & 1.0000 & 0.9993 & 0.9993 & $3.4 \times 10^{-12}$ \\
 &  &  &  & 2 & 1.0000 & 0.9978 & 0.9978 & $4.6 \times 10^{-12}$ \\
0.50 & 0.1 & 0.25 & 0.75 & 1 & 1.0000 & 0.9998 & 0.9998 & $1.6 \times 10^{-12}$ \\
 &  &  &  & 2 & 1.0000 & 0.9971 & 0.9971 & $5.0 \times 10^{-12}$ \\
0.50 & 1 & 0.25 & 0.25 & 1 & 1.0000 & 1.0000 & 1.0000 & $3.5 \times 10^{-12}$ \\
 &  &  &  & 2 & 1.0000 & 1.0000 & 1.0000 & $8.5 \times 10^{-12}$ \\
0.50 & 1 & 0.25 & 0.75 & 1 & 1.0000 & 1.0000 & 1.0000 & $2.6 \times 10^{-13}$ \\
 &  &  &  & 2 & 1.0000 & 0.9999 & 0.9999 & $6.6 \times 10^{-12}$ \\
1.25 & 0.1 & 0.25 & 0.25 & 1 & 0.9749 & 0.9655 & 0.9655 & $4.1 \times 10^{-12}$ \\
 &  &  &  & 2 & 0.9521 & 0.9338 & 0.9338 & $4.4 \times 10^{-12}$ \\
1.25 & 0.1 & 0.25 & 0.75 & 1 & 0.9917 & 0.9870 & 0.9870 & $5.3 \times 10^{-12}$ \\
 &  &  &  & 2 & 0.9589 & 0.9389 & 0.9389 & $4.4 \times 10^{-12}$ \\
1.25 & 1 & 0.25 & 0.25 & 1 & 0.8705 & 0.8647 & 0.8647 & $3.2 \times 10^{-10}$ \\
 &  &  &  & 2 & 0.7540 & 0.7431 & 0.7431 & $5.3 \times 10^{-10}$ \\
1.25 & 1 & 0.25 & 0.75 & 1 & 0.9568 & 0.9539 & 0.9539 & $1.8 \times 10^{-11}$ \\
 &  &  &  & 2 & 0.7878 & 0.7736 & 0.7736 & $6.0 \times 10^{-11}$ \\
\hline
\end{tabular}
\end{table}

Table \ref{tab:comparison_K40_symmetric} summarizes the parameter values used to specify $G^{(i)}$ and the metrics used to compare the LG-based approach with our approach.  
We evaluate the final size distribution for total secondary cases $\leq40$ for each type. 
The results in Table \ref{tab:comparison_K40_symmetric} indicate that our approach and the LG-based approach yield almost the same results, with very minor absolute differences. 
With the calculation of probability mass up to the maximum degree of 40 for the two host types, 
the total probability mass is close to the extinction probability, 
which is to be expected if the calculation is performed correctly. 

\subsection{Likelihood surface for a simulated dataset}

We now demonstrate the likelihood surface captured by our proposed approach, 
with the aim of showing that the smoothness spans the criticality threshold. 
For this purpose, we simulate cluster data based on the contact matrix used in a previous study \citep{kucharski2015PCB} 
that considered a two-type process, where type~1 represents individuals aged $<20$ years and type~2 represents those aged $\ge20$ years. Let $\bm{M}$ be a $2\times2$  contact matrix, 
where $m_{ ij}$ denotes the average number of contacts that a type-$j$ individual makes with type-$i$ individuals. 
As a fixed ``skeleton'' mixing pattern, we use the following matrix aggregated into the two age groups:
\begin{equation}
   \bm{M}=
   \begin{pmatrix}
    4.3 & 3.0\\
    1.3 & 2.7
    \end{pmatrix}.
\end{equation}
Using $\bm{M}$ normalized by its spectral radius, the mean offspring matrix is modeled as 
    \begin{equation}
    \bm{K}(R)=R \left( \frac{\bm{M}}{\rho(\bm{M})} \right)^{T},
    \label{eq:demo2_K_contact}
    \end{equation}
where $\rho(\bm{K}(R))=R$. 
Based on this matrix, we construct a negative multinomial PGF in the same manner as in Eq.~\eqref{eq:demo1_nm_pgf}, with a common dispersion parameter $k$ for the two types. 
Upon simulation, we arbitrarily assume that independent clusters with $(n_1,n_2)=(20,20)$ for both types of index cases have been observed. 
For this demonstration, we arbitrarily produce this dataset by assuming $R = 0.75, k = 0.1$ 
(details can be found in Section 5 of the Supplementary Material).

\begin{figure}[t]
    \centering
    \includegraphics[width=\linewidth]{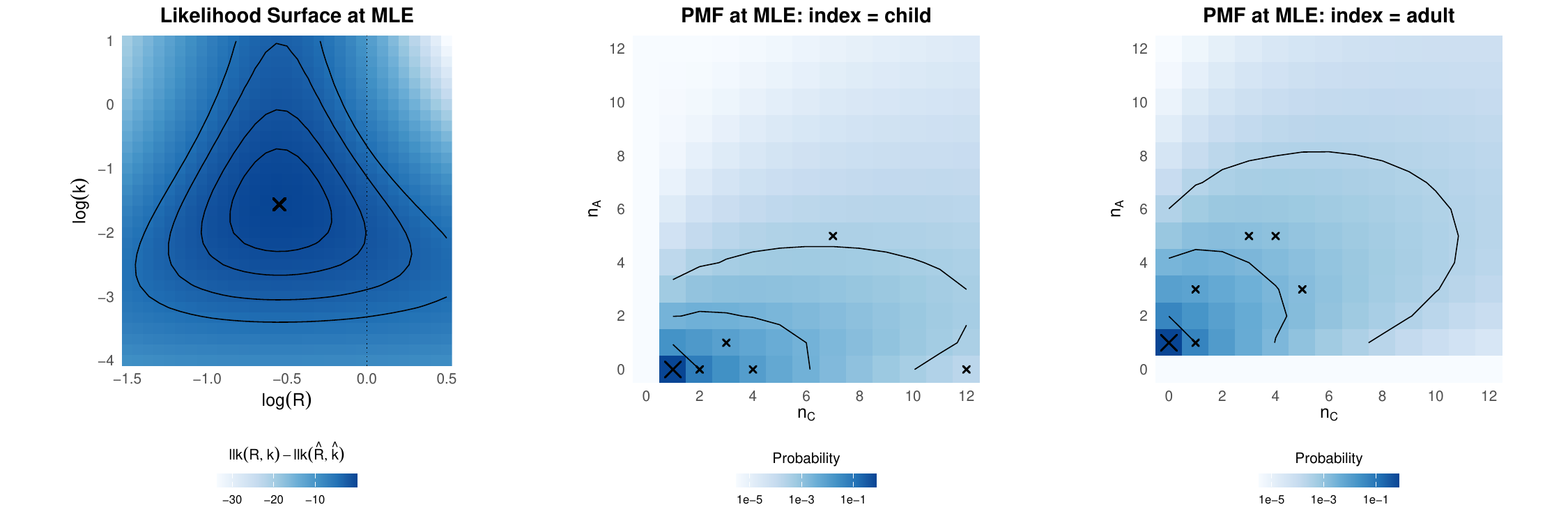}
    \caption{Likelihood surface and estimated probability mass functions (PMFs) for the simulated two-type transmission cluster data. (Left) Two-dimensional likelihood surface for the reproduction number $R$ and dispersion parameter $k$ in log-scale.  The horizontal axis represents $\log R$ and the vertical axis represents $\log k$. The black crosses indicate the maximum likelihood estimate (MLE). (Middle and Right) Observed cluster sizes (black crosses) overlaid on the predicted final size PMFs for transmission chains originating from child and adult index cases, respectively, evaluated at the MLE.}
    \label{fig:demo2_surface}
\end{figure}

A two-dimensional plot of the likelihood surface at the maximum likelihood estimate (MLE; $R = 0.578, k = 0.211$) is drawn in the left panel of Figure~\ref{fig:demo2_surface}, 
where the horizontal axis represents $\log R$ and the vertical axis represents $\log k$. 
(The MLE for a single finite simulated dataset is not expected to coincide exactly with the underlying parameter values used for data generation.) 
First, the likelihood is smoothly connected across the criticality threshold $\log (R)=0$. 
Second, the contours exhibit a bifurcating shape: in the subcritical region ($R \le 1$), 
they form a ridge extending from the bottom-left towards the MLE, indicating a positive correlation between $R$ and $k$. 
In contrast, in the supercritical region ($R > 1$), the contour shows a pronounced ridge extending toward the bottom-right (larger $R$ and smaller $k$).  This descending ridge is naturally interpreted as a trade-off that stems from the extinction probabilities, which may possibly reflect an inherent identifiability challenge. In Section \ref{sec:identifiability}, we discuss a hypothesis on the mechanism underlying this supercritical ridge and its connection to the results in Section \ref{sec:mers}.

\section{Case study: MERS cluster size data from 2012--13}
\label{sec:mers}
\subsection{Dataset and methods}
In this section, we discuss how our framework operates by applying it to the Middle East respiratory syndrome (MERS) dataset in \citet{cauchemez2014MERSLID}, which was later revisited in \citet{kucharski2015PCB}. 
This dataset consists of 42 clusters involving 111 confirmed human cases, with information on the age group ($<20$ vs. $\ge20$ years) and the index case for each cluster. 
For some transmission clusters, category information (children or adults) is lacking for cases that include the index cases. We consider a two-type GWP in which type 1 and type 2 represent individuals aged $<20$ years and $\ge20$ years,
respectively, following the structure in \citet{kucharski2015PCB}.

For an infectious individual of type $i\in\{1,2\}$, let $(X_{i1},X_{i2})$ denote the numbers of secondary cases in the two age groups. 
The PGF of an offspring starting from type $i$ can be written as $G^{(i)}(z_{1},z_{2};\bm{\theta})=\mathbb{E}[z_{1}^{X_{i1}}z_{2}^{X_{i2}}], |z_{1}|<1, |z_{2}|<1$, where $\bm{\theta}$ are GWP parameters that will be fully described later. 
Following  \citet{kucharski2015PCB}, we use the following next-generation matrix:
\begin{equation}
 R\frac{SM}{\rho(SM)}=\lambda \begin{pmatrix} 1 & 0 \\ 0 & s \end{pmatrix} \begin{pmatrix} m_{11} & m_{12} \\ m_{21} & m_{22} \end{pmatrix} \left(=\bm K^T \right),
\end{equation}
where $S=\operatorname{diag}(1, s)$ expresses the relative susceptibility of adults compared with children ($s<1$ is assumed, as in \citet{kucharski2015PCB}), 
$\bm{M}=(m_{i j})$ is the contact matrix, and $\lambda$ is a scalar. 
Following this formulation, the negative multinomial PGF $G^{(i)}$ is defined as
\begin{equation}
G^{(i)}(z_1,z_2;\bm{\theta})=
\left(1+\frac{1}{k}\sum_{j=1}^2 K_{ij} \left(1- z_j\right)\right)^{-k},
\qquad i=1,2.
\label{eq:mers_nm_pgf}
\end{equation}
The final size PGF of the transmission chain originating from a single index case of type $i$ satisfies
\begin{equation}
H^{(i)}(z_{1},z_{2};\bm{\theta}) = z_{i}G^{(i)}(H^{(1)}(z_{1},z_{2};\bm{\theta}),H^{(2)}(z_{1},z_{2};\bm{\theta})), \quad i=1, 2.
\end{equation}
For an arbitrary cluster dataset $D=\{i,d_{1},d_{2}\}$ (where the index case is type $i$ and the numbers of type 1 and type 2 cases are $d_{1}$ and $d_{2}$, respectively), 
we may calculate $p_{i}(d_{1},d_{2};\bm \theta)=[z_{1}^{d_{1}}z_{2}^{d_{2}}]H^{(i)}(z_{1},z_{2};\bm{\theta})$, the coefficient of $z_{1}^{d_{1}}z_{2}^{d_{2}}$ in $H^{(i)}$. 

Let $I_D\in \{1, 2, \mathrm{NA}\}$ denote the recorded type of the index case in cluster $D$, where $I_D =\mathrm{NA}$ indicates that the index type information is missing from the report. By considering $p(I_D =1)=\pi$, i.e., the probability that a reported cluster starts from an index case with type 1, and $p(I_D =2)=1-\pi$, the likelihood of observing the cluster can be simply written as $L_{D}=p(I_D = i)p_{i}(d_{1},d_{2};\bm{\theta}), i=1, 2$, for datasets with complete information on the type of both the index case and all cases. 

When complete information on the type of index case or type information for some cases is lacking, as for some clusters in the MERS dataset, some additional consideration is required.

A) The type of the index case is unknown ($I_D =\mathrm{NA}$), but complete information is available for the types of all cases:
By considering $\pi$, i.e., the probability that a reported cluster starts from an index case of type 1, 
the marginalized likelihood of observing the cluster is
\begin{equation}
L_{D}=\sum_{i=1}^{2}p(I_D = i)p_{i}(d_{1},d_{2};\bm{\theta})=\pi p_{1}(d_{1},d_{2};\bm \theta)+(1-\pi)p_{2}(d_{1},d_{2};\bm \theta).
\end{equation}
B) The type of the index case is known, but there is incomplete information on the types for all cases in the cluster:
When the type of the index case is known to be type $i$, the marginal likelihood of observing the cluster can be calculated as
\begin{equation}
L_{D}=p(I_D = i)\sum_{m=0}^{M}p_{i}(d_{1}^{obs}+m,d_{2}^{obs}+M-m;\bm \theta),
\end{equation}
where $M$ is the total number of cases in the cluster whose exact type is unknown (i.e., $M = N_{total} - d_1^{obs} - d_2^{obs}$).

C) Both the type of the index case and complete information on all cases are missing:
By analogy to the preceding two cases, the marginalization is
\begin{equation}
L_{D}=\sum_{m=0}^{M}\left(\pi p_{1}(d_{1}^{obs}+m,d_{2}^{obs}+M-m;\bm \theta)+(1-\pi)p_{2}(d_{1}^{obs}+m,d_{2}^{obs}+M-m;\bm \theta)\right).
\end{equation}

Following scenarios A--C, we may calculate the total likelihood for observing the entire dataset of MERS clusters as
\begin{equation}
L(\bm{\theta})=\prod_{D\in D_{MERS}}L_{D}(\bm{\theta}),
\end{equation}
where $\bm{\theta}=\{\log R,\log k,\operatorname{logit} S,\operatorname{logit} \pi\}$. For $\bm \theta$, weakly informative priors are introduced for statistical inference:
\begin{equation} 
\begin{aligned}
\log (R) \sim \mathrm{Normal}(0, 1^2), \\
\log (k) \sim \mathrm{Normal}(0, 2^2), \\
\operatorname{logit}(S) \sim \mathrm{Normal}(0, 5^2), \\
\operatorname{logit}(\pi) \sim \mathrm{Normal}(0, 5^2).
 \end{aligned}
 \end{equation} 
We estimated the parameters using the Markov Chain Monte Carlo (MCMC) method, as implemented by the Differential Evolution MCMC algorithm (named ``DEzs'') in the R package ``BayesianTools'' \citep{Hartig2017-th} with four independent MCMC runs using three internal parallel subchains with 2,000 iterations. 
After discarding the first 1,000 iterations for each subchain, convergence was assessed and it was confirmed that all parameters satisfied $\widehat{R} <1.04$ and the effective sample size was $>770$. 
For convenience, we retained 1,200 posterior draws for every parameter to evaluate the model fit 
(additional diagnostic plots are provided in Figs. S1--S3 of the Supplementary Material).

\subsection{Results}
The parameter estimates are presented in Table \ref{tab:mers_posterior_summ}. 
The $R$ estimate is 0.77 (95\% CrI: 0.57, 1.08), which is comparable to that of \citet{kucharski2015PCB}. 
We did not specifically impose a constraint on $R$ that forced the process to be subcritical, 
as can be observed in the posterior draws and uncertainty bound of $R$. 
The values of $k$,  $S$, and $\pi$ are estimated to be 0.30 (95\% CrI: 0.10, 1.20),  0.99 (95\% CrI: 0.89, 1.00),  and 0.05 (95\% CrI: 0.01, 0.14), 
respectively, where the estimated values of $S$ are in line with previous estimates \citep{kucharski2015PCB}. 
As shown in Figure S3 in the Supplementary Material, 
the correlation between posterior samples suggests a ridge toward the high $\log R$ - $\log k$ direction that extends into the supercritical domain. 

\begin{table}[t]
\centering
\caption{Parameters estimated from the MERS dataset.}
\label{tab:mers_posterior_summ}
\begin{tabular}{cc}
\hline
Parameter & Posterior Median (95\% CrI) \\
\hline
$R$ & 0.77 (0.57, 1.08)\\
$k$ & 0.30 (0.10, 1.20)\\
$S$ & 0.99 (0.89, 1.00)\\
$\pi$ & 0.05 (0.01, 0.14)\\
\hline
\end{tabular}
\end{table}

Figure \ref{fig:mers_overlay}  confirms the fit of our inference results to the observed data by overlaying the observed clusters on the posterior mean probability mass matrix. 
Although it is difficult to make a clear judgement on whether or not the clusters with incomplete information are actually inside the 95\% highest-mass region, only 2 out of 42 clusters are clearly outside that range. 

\begin{figure}[t]
    \centering
    \includegraphics[width=\linewidth]{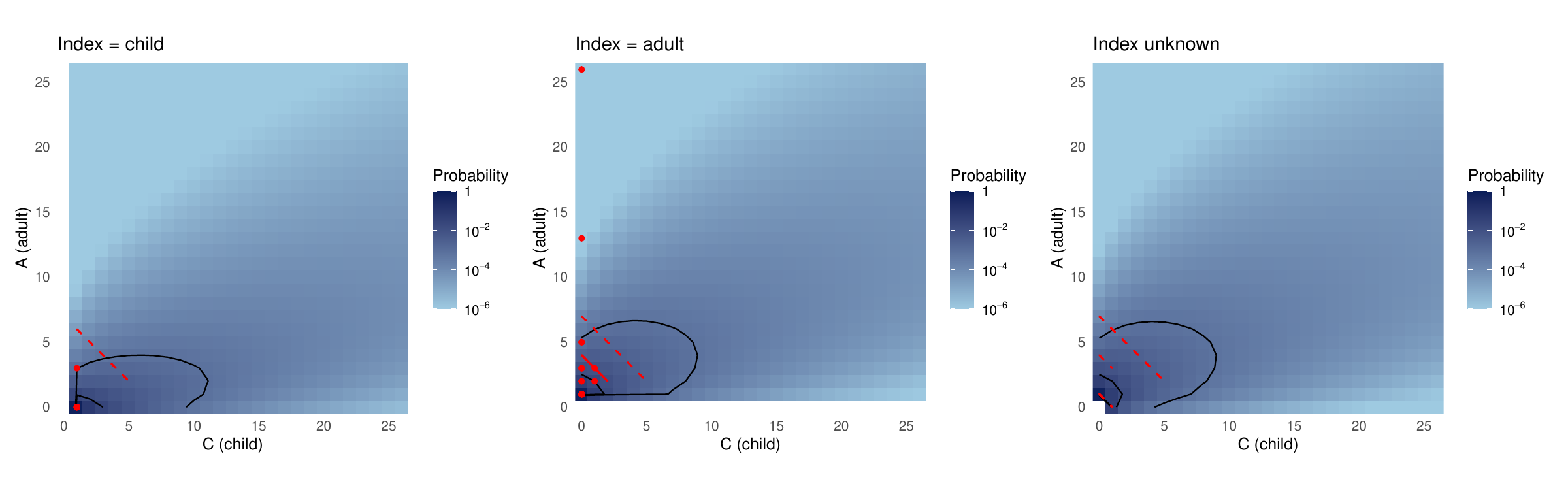}
    
    \caption{Posterior mean probability mass functions for observed child/adult cluster sizes under the fitted model. Black contours indicate the 50\%, 80\%, and 95\% highest-mass regions. Dots denote clusters with fully observed child/adult counts. Solid line segments denote clusters with known index type, but incomplete child/adult composition, and dashed line segments denote clusters with unknown index type.}
    \label{fig:mers_overlay}
\end{figure}

\section{Identifiability and parameter correlation in supercritical domains}
\label{sec:identifiability}

We now explore the potential mechanism for the negative correlation between $R$ and $k$ for posterior draws in the supercritical ($R>1$) region, as observed in the previous section. 
In general, the unconditional likelihood of observing cluster size data in general can be decomposed as
\begin{equation}
\label{uncond_decomp}
    L_{unconditional}(\bm{D};\bm{\theta}) =L_{conditional}(\bm{D};\bm{\theta})\times \prod_{i}q_i(\bm{\theta})^{n_i},
\end{equation}
where $\bm{D}$ denotes an arbitrary observation, $n_i$ denotes the number of clusters observed in $\bm{D}$ that originated from a type-$i$ host, and $q_i$ is the extinction probability for $i = 1, \ldots, n$. 
For a subcritical GWP, the second term of Eq.~\eqref{uncond_decomp} is trivial because $q_i=1$, 
and it is known that $L_{conditional}(\bm{D};\bm{\theta})$ does not have sufficient information to distinguish the criticality \citep{Farrington2003, Waxman2019CondUncond, Jagers2008PGFcondtioned}. 
Therefore, we may reasonably focus our discussion on the nontrivial $q_i$ of supercritical GWPs. 
For clarity and simplicity, we again consider negative multinomial offspring:
\begin{equation}
    G^{(i)}(\bm{z}; \bm{\theta}) = \left( 1 + \frac{1}{k_i} \sum_{j=1}^n K_{ij}\left( 1 - z_j \right) \right)^{-k_i},\qquad i=1,2,\ldots, n,
\end{equation}
where $k_i$ is the dispersion parameter, and focus on the fixed-point equation $\bm{q} = \bm{G}(\bm{q})$ that gives the extinction probability vector $\bm{q} = (q_1, \dots, q_n)^\top \in (0, 1)^n$. 
Both $\bm{G}$ and $\bm{q}$ are implicit functions of $\bm{\theta}=(\bm K, \bm k)$, 
where $\bm K = \left( K_{ij} \right)$ and $\bm k = \left( k_1, \ldots, k_n \right)$. 
A small change in $\bm q$ can be written as  
\begin{equation}
    dq_{i} = \sum_{\bm \theta}\frac{\partial q_{i}}{\partial \theta}d\theta=  \sum_{j, \ell} \frac{\partial q_{i}}{\partial K_{j\ell}}dK_{j\ell}+\sum_{j} \frac{\partial q_i}{\partial k_j} dk_j,\qquad i, j, \ell=1,2,\ldots, n.
\end{equation}
From the implicit differentiation of the fixed-point equation $\bm{q} = \bm{G}(\bm{q})$ (details are provided in the Supplementary Material), the following holds:
\begin{itemize}
    \item $\frac{\partial q_i}{\partial K_{j\ell}}<0 $ for all $i, j, \ell$, 
    \item $\frac{\partial q_i}{\partial k_j} < 0$ for all $i, j$.
\end{itemize}
Therefore, in general, implicit differentiation suggests a possible local trade-off among the parameters in $\bm \theta$ regarding $d\bm q \approx\bm0$. 
In the MERS setting in Section \ref{sec:mers}, where the mean offspring matrix is mostly anchored by the contact matrix, 
this is consistent with the negative association between $R=\rho( \bm{J_G}(\bm{1}))$  and $k$ suggested by the ridge toward the bottom-right in the posterior correlation plot of $\log (R)$ - $\log(k)$. 
However, this argument concerns only the trade-off regarding $d\bm q \approx\bm0$, and does not consider the change of $ L_{conditional}(\bm{D};\bm{\theta}) $ in Eq. \eqref{uncond_decomp}.  
Additionally, the region of interest in the supercritical domain will in principle approach the criticality threshold as the number of observed clusters increases. 
Nonetheless, for a given dataset, Eq. \eqref{uncond_decomp} indicates that this trade-off may potentially lead to parameter identifiability issues.

\section{Discussion}
The present study has focused on a practical framework for employing the Cauchy integral method to compute the final size PGFs of multi-type GWPs. 
First, our proof of Proposition \ref{prop:existence} provided a theoretical basis for the valid computation of final size distributions via a Cauchy integral calculation, 
regardless of the criticality of the underlying GWP. 
Second, our simple scheme for calculating the final size distribution using a combination of Newton's method and DFT proved to be comparable to the LG approach when applied to simulated datasets,
and enabled likelihood-based inference of key epidemiological parameters when applied to the MERS cluster size distribution data. 
In addition to our findings regarding practical implementation, 
we also explored possible parameter identifiability issues in the supercritical domain regarding the elements of the mean offspring matrix and overdispersion parameters for GWPs with negative multinomial offspring. 

Several findings from the present study should be stressed. 
Regarding Proposition \ref{prop:existence}, our proof is essentially an application of contraction theory; 
however, to the best of our knowledge, practically feasible choices of contours in Cauchy integrals, as in Eq.~\eqref{eq:coef_cauchy}, 
and the uniqueness of solutions to the final size formula of Eq.~\eqref{eq:fse_Tz} on the entire contour under that choice have not been explicitly argued. 
In the present study, we clarified that, even when the underlying GWP is supercritical, there is an upper limit of suitable radii for contours that are greater than the extinction probability vector. 
Although we may adopt a more conservative condition upon application for simplicity, 
this may provide potential options when greater radii should be chosen for reasons of numerical stability. 
Another value of our approach is that it provides a firm basis for obtaining the numerator of the integrand in Eq.  \eqref{eq:coef_cauchy}; 
this has been applied in practice without explicit discussion on the convergence or uniqueness of solutions. 

Technically, given that the performances of our approach and the LG-based approach were found to be comparable, 
there are two advantages of our approach over the LG-based approach. 
First, our approach does not involve the change of variables required in the LG method, 
and the principle for choosing the contours of the Cauchy integral is well-defined. 
Though it is possible to numerically examine the regularity of the Jacobian matrix appearing in the LG formula inside the contours (as we actually did in our demonstration), 
the principle is not well-defined in a practical sense. The second advantage is that, 
in our approach, the same $h$ in the numerator of the integrand can be used for every PMF to be calculated; 
in the LG method, the integrands do not have common structures across different probability masses
or PGF coefficients. 
This is important upon calculation of the entire probability mass matrices, 
for example, when we want to project the range of future numbers of cases for public health risk assessment purposes. 

From a practical viewpoint, the final size distributions of multi-type GWPs have not been applied rigorously to real-world data, including epidemic cluster sizes, 
with limited applications based on tree enumeration or the Cauchy integral-based method \citep{Brummitt2012CascadeFSD, kucharski2015PCB}. 
In this context, our framework will serve not only as a theoretical foundation for the direct calculation of coefficients using Cauchy integrals, 
but also as an important addition to the conventional approaches because of its conceptual and practical simplicity, which may potentially promote wider application. 
Specifically, it is applicable to multi-type GWPs with general offspring PGFs not limited to the negative binomial or negative multinomial family, 
does not require numerical exploration of the regularity of the Jacobian as appears in the LG formula,
and allows seamless handling of subcritical and supercritical processes within the same framework.
It is also encouraging that we could actually infer epidemiological parameters without imposing an assumption on the criticality of the underlying GWP from both our simulated data and the MERS cluster size data. 

Finally, we have revealed that extinction probabilities for GWPs with negative multinomial offspring are monotonically decreasing functions of both the elements of the mean offspring matrix 
and the overdispersion parameters for all types of hosts.
In our context, the interpretation is that this may be the underlying mechanism of the negatively correlated $\log R$ and $\log k$ in the supercritical domain in our MERS result; 
however, our analysis provides a direct foundation behind the general notion regarding the effect that the reproduction number and the overdispersion parameter have on the extinction probability, 
as has been discussed for single-type processes \citep{Lloyd-Smith2005SSevent}.

Several limitations and extensions merit further investigation. First, our numerical algorithm was only tested for two host types. For greater numbers of host types, 
efficient computation in the higher-dimensional complex spaces should be considered, 
such as sparse approximations of the offspring PGFs or importance sampling in the complex space for efficient calculation of Cauchy integrals.
Second, the case studies of our framework described in this paper only assumed negative multinomial offspring and did not incorporate information on how the clusters arise and are reported in the real world. 
Therefore, model misspecification or unobserved confounding factors may have biased our inference.  
Third, we did not specifically focus on the identifiability of the criticality itself, which is a recognized topic \citep{Farrington2003, Waxman2019CondUncond}. 
Exploring methods or datasets that may improve the identification of criticality will be an important direction for future work.  
Finally, we have focused on estimating the spectral radius of the mean offspring matrix and the overdispersion parameter from cluster size data. 
Extending the framework to real-time forecasting or to joint inference on additional epidemiological parameters, 
such as generation-interval distributions or time-varying control measures, is an interesting direction for future work.

In conclusion, we have provided a straightforward framework for evaluating Cauchy integral representations of final size PGF coefficients for the multi-type GWP, 
with a practical guide for choosing valid contours of the Cauchy integrals. 
We also demonstrated the applicability of our framework both to simulated and real-world epidemiological data, 
while also providing an insight into the relationship between extinction probabilities and key epidemiological parameters, 
and its potential effect on parameter identifiability regarding statistical inference.

\begin{acks}[Acknowledgments]
Hiroshi Nishiura is also affiliated with the Center for Health Security, Graduate School of Medicine, Kyoto University.

This work was supported by JSPS KAKENHI (grant number 25K20599 to YO), the Japan Agency for Medical Research and Development (grant number JP26fk0108742 to YO), the SECOM Science and Technology Foundation (YO), Health and Labour Sciences Research Grants (grant numbers 23HA2005 to HN), HU-RIZONT international research excellence program (Rapid-GRIP project: 2024-1.2.3-HU-RIZONT-2024-00034 to HN), the World Health Organization (HN), the Japan Science and Technology Agency  CREST program (grant number JPMJCR24Q3 to HN). The funders had no role in the study design, data collection and analysis, decision to publish, or preparation of the manuscript. 

We thank Stuart Jenkinson, PhD, from Edanz (https://jp.edanz.com/ac) for editing a draft of this manuscript. 
\end{acks}

\begin{supplement}
\stitle{Supplementary material for ``Computing the final epidemic size distributions of a multi-type Galton--Watson process''}
\sdescription{The supplement contains the full proof of Proposition~\ref{prop:existence}, additional numerical details for the Cauchy integral and Lagrange--Good calculations, extra simulation results, MCMC diagnostics for the MERS analysis, and derivations for the sensitivity of extinction probabilities under negative multinomial offspring.}
\end{supplement}

\begin{supplement}
\stitle{Code and data for reproducing the numerical results}
\sdescription{This archive contains the R scripts and input data, together with key
results in the article and supplement. These are also available at GitHub
(URL: \FSDGitHubURL) and Zenodo (URL: \FSDZenodoURL).}
\end{supplement}


\bibliographystyle{imsart-nameyear} 
\bibliography{references}       


\end{document}


\begin{frontmatter}
\title{Supplementary material for ``Computing the final epidemic size distributions of a multi-type Galton--Watson process''}
\runtitle{Supplementary material}

\begin{aug}
\author[A]{\fnms{Yuta}~\snm{Okada}\ead[label=e1]{okada.yuta.4y@kyoto-u.ac.jp}}
\and
\author[A]{\fnms{Hiroshi}~\snm{Nishiura}\ead[label=e2]{nishiura.hiroshi.5r@kyoto-u.ac.jp}}

\address[A]{School of Public Health, Graduate School of Medicine, Kyoto University\printead[presep={,\ }]{e1,e2}}
\end{aug}

\end{frontmatter}

\section{Proof of Proposition 3.1}Restatement of Proposition 3.1:

Let $\bm{G}(\bm{z} )=(G^{(1)}(\bm{z}),...,G^{(n)}(\bm{z}))^{T}$ be the offspring probability generating function (PGF) vector
of an $n$-type Galton--Watson process (GWP) defined for $|z_{i}|\le1$, $i=1,2,...,n$, such that
$\bm{J}_{\bm{G}}(\bm{r})$ has finite elements and is nonnegative and irreducible for $\bm r \in (0, 1]^n$.
Then, there exists $\bm{r}^{\#}=(r_{1}^{\#},...,r_{n}^{\#}) \in [0,1]^n$ such that, for every complex
vector $\bm{z}$ inside the polydisk $D_{\bm{z}}:|z_{i}|<r_{i}^{\#}$, $i=1,...,n$, the functional equation
\[
\bm{h} =  \operatorname{diag}(\bm{z})\bm{G}(\bm{h}) \eqqcolon \bm{T}_{\bm{z}}(\bm{h})
\]
has a unique solution $\bm{h}\in\mathbb{C}^{n}$ with $|\bm {h}|\leq |\bm z| <\bm{r}^{\#}$.

\begin{proof}
First, by definition, the elements of both $\bm{G}$  and its Jacobian matrix $\bm{J}_{\bm{G}}(\bm{z})=\left[ \left(\frac{\partial G^{(i)}}{\partial z_j} \right)_{i,j} \right] \in \mathbb{C}^{n \times n}$ are multivariate power series with nonnegative coefficients. Therefore, the following relationship holds for the weighted infinity norm by an arbitrary positive vector $\bm{x}$:
\begin{equation}
\label{eq:norm_ineq}
\|\bm{J}_{\bm{G}}(\bm{z})\|_{\infty,\bm{x}} \le \| |\bm{J}_{\bm{G}}(\bm{z})| \|_{\infty,\bm{x}} \le \|\bm{J}_{\bm{G}}(|\bm{z}|)\|_{\infty,\bm{x}},
\end{equation}
where $\|\bm{A}\|_{\infty,\bm{x}}=\max_{i}(\frac{(|\bm{A}|\bm{x})_{i}}{x_{i}})$. (Hereafter, we denote $|\bm{z}|=\bm{r}\in\mathbb{R}^{n}$.) Because $\bm{J}_{\bm{G}}(\bm{r})$ is nonnegative and irreducible, the Perron--Frobenius theorem implies that $\bm{J}_{\bm{G}}(\bm{r})$ has positive left and right leading eigenvectors $\bm{u}(\bm{r})$ and $\bm{v}(\bm{r})$ for every $\bm{r}=(r_{1},...,r_{n})$, $r_{i}\in(0,1]$ (for simplicity, we assume $\bm{u}^{T}\bm{v}=1$ without loss of generality) \citep{hornjohnson2012matrix}. The elements of $\bm{J}_{\bm{G}}(\bm{r})$ and $\rho(\bm{J}_{\bm{G}}(\bm{r}))$, and the spectral radius of $\bm{J}_{\bm{G}}(\bm{r})$, are nondecreasing functions of $\bm{r}$.

Next, regarding  $\bm{T}_{\bm{z}}(\bm{h})$, it readily follows that $|\bm{T}_{\bm{z}}(\bm{h})|<|\bm{z}|=\bm{r}\le \bm{1}$ when $|\bm{h}|\le|\bm{z}|=\bm{r}\le\bm{1}$. Suppose we have $\bm{h}, \bm{h}^{\prime}$ such that $\bm{h}\ne \bm{h}^{\prime}$ and $|\bm{h}|\le|\bm{z}|,|\bm{h}^{\prime}|\le|\bm{z}|$. Here, we consider a linear path $\bm{\omega}(t)=\bm{h}+t(\bm{h}^{\prime}-\bm{h})$, $t\in[0,1]$. Note that  $|\bm{\omega}(t)|\le \max(|\bm{h}|,|\bm{h}^{\prime}|)=\bm{r}$ and $|\bm{T}_{\bm{z}}(\bm{\omega}(t))|<|\bm{z}|\le \bm{1}$. By writing  $\bm{G}(\bm{h}^{\prime})-\bm{G}(\bm{h})$ in the integral form as
\begin{equation}
\bm{G}(\bm{h}^{\prime})-\bm{G}(\bm{h})=\int_{t=0}^{t=1}\bm{J}_{\bm{G}}(\bm{\omega}(t))(\bm{h}^{\prime}-\bm{h})dt,
\end{equation}
Eq.~\eqref{eq:norm_ineq} implies that the following relationship holds for an infinity norm weighted by an arbitrary positive vector $\bm{x}$:
\begin{align}
\|\bm{G}(\bm{h}^{\prime})-\bm{G}(\bm{h})\|_{\infty,\bm{x}} & \le \|\int_{t=0}^{t=1}\bm{J}_{\bm{G}}(\bm{\omega}(t))(\bm{h}^{\prime}-\bm{h})dt\|_{\infty,\bm{x}} \notag \\
&\le \|\int_{t=0}^{t=1}\bm{J}_{\bm{G}}(|\bm{\omega}(t)|)\|_{\infty,\bm{x}}\|\bm{h}^{\prime}-\bm{h}\|_{\infty,\bm{x}} dt\notag \\
&\le \|\bm{J}_{\bm{G}}(\max(|\bm{h}|,|\bm{h}^{\prime}|))\|_{\infty,\bm{x}}\|\bm{h}^{\prime}-\bm{h}\|_{\infty,\bm{x}} \notag \\
&\le \|\bm{J}_{\bm{G}}(\bm{r})\|_{\infty,\bm{x}}\|\bm{h}^{\prime}-\bm{h}\|_{\infty,\bm{x}}.
\end{align}
Thus, it follows that
\begin{equation}
\|\bm{T}_{\bm{z}}(\bm{h}^{\prime})-\bm{T}_{\bm{z}}(\bm{h})\|_{\infty,\bm{x}}\le \|\operatorname{diag}(\bm r)\bm{J}_{\bm{G}}(\bm{r})\|_{\infty,\bm{x}}.\|\bm{h}^{\prime}-\bm{h}\|_{\infty,\bm{x}}.
\end{equation}

Setting $\bm M(\bm r) = \operatorname{diag}(\bm r)\bm{J}_{\bm{G}}(\bm{r})$, we now show the existence of $\bm{r}^{\#} \in \mathbb{R}^n$ and $\bm x = \bm{x}(\bm r^{\#})\in \mathbb{R}^n$ such that $\| \bm M(\bm r^{\#})  \|_{\infty, \bm x}<1$ regardless of the criticality of $\bm{G}$. 

(i) Subcritical case: $\rho(\bm{J}_{\bm{G}}(\bm{1}))<1$.
It can be seen that $\rho(\bm M(\bm r ))<1$ for all $\bm r \in [0, 1]^n$, because $\bm M(\bm r ) \leq \bm{J}_{\bm{G}}(\bm{r}) \leq \bm{J}_{\bm{G}}(\bm{1})$ holds element-wise. Therefore, by letting $\bm x(\bm r) = \bm v$ be the leading right eigenvector of $\bm M(\bm r)$, $\| \bm M(\bm r)  \|_{\infty, \bm v} = \rho(\bm M(\bm r) )<1$.

(ii) Critical case: $\rho(\bm{J}_{\bm{G}}(\bm{1}))=1$.
The same discussion as in (i) holds for all $\bm{r}\in [0, 1-\epsilon]^n$.

(iii) Supercritical case: $\rho(\bm{J}_{\bm{G}}(\bm{1}))>1$.
The vector of extinction probabilities $\bm{q}=(q_{1},q_{2},...,q_{n})\in[0,1)^{n}$ is the minimum real-valued solution of $\bm{q}=\bm{G}(\bm{q})$, and $\rho(\bm{J}_{\bm{G}}(\bm{q}))<1$. Here, for simplicity, we consider a linear path from $\bm{q}$ to $\bm{1}$, $\bm{r}(t)=\bm{q}+t(\bm{1}-\bm{q})$,  $t\in[0,1]$. 
Because $\rho(\bm M(\bm r(0) ))=\rho(\bm M(\bm q))<1$ and $\rho(\bm M(\bm r(1) )) =\rho(\bm{J}_{\bm{G}}(\bm{1}))>1 $, there exist $t^{*} \in [0, 1]$ such that $\rho(\bm M(\bm r(t^{*}) ))=1$. Therefore, by choosing $\bm r = \bm r(t^{*}-\epsilon)$, it again follows that $\| \bm M(\bm r)  \|_{\infty, \bm v} = \rho(\bm M(\bm r) )<1$.

The discussion in (i)--(iii) ensures the existence of $\bm{r}^{\#}$ and $\bm{x} = \bm v(\bm r^{\#})$ such that $\| \bm M(\bm r)  \|_{\infty, \bm x}<1$. Therefore, the following holds for two vectors $\bm{h}$ and $\bm{h}^{\prime}$ satisfying $|\bm{h}|\le|\bm{z}|,|\bm{h}^{\prime}|\le|\bm{z}|$:
\begin{equation}
\label{eq:Tz_lipconst}
\|\bm{T}_{\bm{z}}(\bm{h}^{\prime})-\bm{T}_{\bm{z}}(\bm{h})\|_{\infty,\bm{v}(\bm{r}^{\#})}\le L\|\bm{h}^{\prime}-\bm{h}\|_{\infty,\bm{v}(\bm{r}^{\#})},
\end{equation}
where $L=\| \bm M(\bm r)  \|_{\infty, \bm v(\bm r^{\#})}<1$. Equation \eqref{eq:Tz_lipconst} ensures that, for a sequence $\bm{h}^{\{k\}}$ defined by $\bm{h}^{\{k+1\}}=\bm{T}_{\bm{z}}(\bm{h}^{\{k\}})$, $\bm{h}^{\{0\}}\le|\bm{z}|<\bm{r}^{\#}$, it follows that $\|\bm{h}^{\{k+1\}}-\bm{h}^{\{k\}}\|_{\infty,\bm{v}(\bm{r}^{\#})}$ decreases geometrically as $k\rightarrow\infty$, ensuring the existence of $\lim_{k\rightarrow\infty}\bm{h}^{\{k\}}=\bm{h}^{\{\infty\}}$,  the solution to the fixed-point equation \(\bm{h} = \operatorname{diag}(\bm{z})\bm{G}(\bm{h})\). 
Equation \eqref{eq:Tz_lipconst} also implies that $\bm{h}^{\{\infty\}}$ must be a unique solution. If two distinct vectors $\bm{h}^{\prime}$ and $\bm{h}^{\prime  \prime}$ satisfying $|\bm{h}^{ \prime}|\le|\bm{z}|,|\bm{h}^{\prime  \prime}|\le|\bm{z}|$ are the solutions to $\bm{h} =  \bm{T}_{\bm{z}}(\bm{h})$, applying $\bm{T}_{\bm{z}}$ to both vectors leads to
\begin{equation}
\|\bm{h}^{\prime  \prime}-\bm{h}^{ \prime} \|_{\infty,\bm{v}(\bm{r}^{\#})}=\|\bm{T}_{\bm{z}}(\bm{h}^{\prime  \prime})-\bm{T}_{\bm{z}}(\bm{h}^{ \prime})\|_{\infty,\bm{v}(\bm{r}^{\#})}\le L\|\bm{h}^{\prime  \prime}-\bm{h}^{ \prime}\|_{\infty,\bm{v}(\bm{r}^{\#})},
\end{equation}
which can only happen when $\bm{h}^{\prime  \prime}=\bm{h}^{ \prime}$. Thus, there exists $\bm{r}^{\#}$ such that a unique solution exists in $|\bm{h}|\le|\bm{z}|<\bm{r}^{\#}$ for $\bm{h} = \bm{T}_{\bm{z}}(\bm{h})$.
\end{proof}

\section{Connection between the existence of derivatives and the choice of contour}
\label{supp:higher_deriv_existence}

As mentioned in the main text, the argument presented in the proof of Proposition 3.1 is essentially equivalent to choosing a domain for $\bm z$ in which the derivatives of $\bm H(\bm z)$ are finite. 

Under the usual assumptions of Proposition 3.1, the derivatives of $\bm G(\bm z)$ are analytic in $|\bm z|< \bm r^{\#}$. Implicit differentiation of $\bm{H}(\bm{z}) = \operatorname{diag}(\bm{z})\bm{G}(\bm{H}(\bm{z}))$ with respect to $\bm z$ then leads to 
\begin{equation}
\label{eq:first_order_deriv}
\left( \bm{I} - \operatorname{diag}(\bm{z})\bm{J}_{\bm{G}}(\bm{H}) \right) \bm J_{H}(\bm z) = \operatorname{diag}(\bm{G}(\bm{H})), 
\end{equation}
or 
\begin{equation}
\bm J_{H}(\bm z) = \left( \bm{I} - \operatorname{diag}(\bm{z})\bm{J}_{\bm{G}}(\bm{H}) \right) ^{-1} \operatorname{diag}(\bm{G}(\bm{H})).
\end{equation}
The regularity of $\bm{I} - \operatorname{diag}(\bm{z})\bm{J}_{\bm{G}}(\bm{H})$ was discussed in the proof of Proposition 3.1. Therefore, as long as $|\bm z|< \bm r^\#$, the first-order derivative is finite, and it follows that $\bm H(\bm z)$ is analytic in this region. 

We can confirm this by iterative differentiation. Briefly, considering the $m$th-order derivative of $\bm H(\bm z )$ that can be derived from iterative differentiation of $\bm{H}(\bm{z}) = \operatorname{diag}(\bm{z})\bm{G}(\bm{H}(\bm{z}))$ with respect to an arbitrary sequence of variables $z_{l_1}, z_{l_2}, \ldots, z_{l_m}$, by placing only the terms that have $ \left [ \frac{\partial^{m} \bm H}{\partial z_{l_1} \ldots \partial z_{l_m}}\right] \in \mathbb{C}^n$ on the left hand side, we have that
\begin{equation}
\label{eq:higher_order_deriv}
\left( \bm{I} - \operatorname{diag}(\bm{z})\bm{J}_{\bm{G}}(\bm{H}) \right) \left [ \frac{\partial^{m} \bm H}{\partial z_{l_1} \ldots \partial z_{l_m}}\right] = \bm R^{(m-1)}(\bm z), 
\end{equation}
where $\bm R^{(m-1)}(\bm z) \in \mathbb{C}^n$ denotes a vector consisting of terms with lower-order derivatives of $\bm H(\bm z)$ and derivatives of $\bm G(\bm z)$. This essentially means that $\bm H(\bm z)$ is analytic, or that the derivatives of $\bm H(\bm z)$ are finite and well-defined in $|\bm z|< \bm r^\#$. 

\section{Details on the numerical framework for calculation of the final size distribution} 
\label{supp:implementation}

\subsection{Cauchy integral calculation}
The evaluation of PGF coefficients by Cauchy integrals has previously been described in epidemiological contexts \citep{Miller2018PGFprimer, Roberts2023EthicalControl_sciadv}. After choosing $\Gamma_{\bm{z}}$, in the main text, we evaluated the integrand of the Cauchy coefficient formula on multidimensional grids on $\Gamma_{\bm{z}}$ using DFT. As an example, for the two-dimensional cases we consider later, the grids are 
\begin{equation}
\bm{z}(l_{1},l_{2};r_1, r_2)=(r_{1}e^{2\pi \mathrm{i}\frac{l_{1}}{N}},r_{2}e^{2\pi \mathrm{i}\frac{l_{2}}{N}}), \quad l_{1}, l_{2}=0,...,N-1,
\end{equation}
and the Cauchy integral can be approximated as 
\begin{align}
\label{eq:coef_dft}
f^{(i)}(d_{1},d_{2}) &\approx \frac{1}{N^{2}}\sum_{l_{1},l_{2}=0}^{N-1}\frac{h_{i}(\bm{z}(l_{1},l_{2};r_1, r_2))}{r_{1}^{d_{1}}r_{2}^{d_{2}}}\exp\left(-\frac{2\pi \mathrm{i}}{N}(d_{1}l_{1}+d_{2}l_{2})\right).
\end{align}
In practice, we set the number of grids to $N=max(2^5, 2^{(\lceil \log_2(d_{\max}) \rceil +1)})$, where $d_{\max}$ is the maximum degree in the dataset; this is a simple rule aligning with general recommendations to reduce aliasing errors while maintaining computational efficiency \citep{trefethen2014trapezoidal, cooley1965fft}. 
\subsection{Solving the final size equation on the contour}
In the main text, we numerically solved $\bm{F}_{\bm{z}}(\bm{h})=\bm{0}$ by minimizing the scalar function $\psi_{\bm{z}}(\bm{h})= \| \bm{F}_{\bm{z}}(\bm{h}) \|^{2}_{2}$, or the product of $\bm{F}_{\bm{z}}(\bm{h})$ and its conjugate transpose (note that $\psi_{\bm{z}}(\bm{h})\ge0$ by definition). Because the argument in Proposition 3.1 ensures the regularity of the Jacobian matrix $\bm{J}_{\bm{F}_{\bm{z}}}(\bm{h})= \bm I - \operatorname{diag}(\bm z)\bm J_{\bm G}(\bm h)$ of $\bm{F}_{\bm{z}}$ when $|\bm{h}|\le \bm{r}^{\#}$, we may choose a vector $\bm{\Delta}(\bm{h})=-\bm{J}_{\bm{F}_{\bm{z}}}^{-1}(\bm{h})\bm{F}_{\bm{z}}(\bm{h})$. The definition of $\bm{\Delta}(\bm{h})$, for small $\alpha>0$ means that $\psi_{\bm{z}}(\bm{h}+\alpha\bm{\Delta}(\bm{h}))-\psi_{\bm{z}}(\bm{h}) \approx -2\alpha\psi_{\bm{z}}(\bm{h})$, and by letting $\alpha \rightarrow0$, the directional differentiation can be written as 
\begin{equation}
\frac{d}{d\alpha}\psi_{\bm{z}}(\bm{h}+\alpha\bm{\Delta}(\bm{h}))|_{\alpha=0}=-2\psi_{\bm{z}}(\bm{h}).
\end{equation}
As long as $\psi_{\bm{z}}(\bm{h})>0$, it follows that $\frac{d}{d\alpha}\psi_{\bm{z}}(\bm{h}+\alpha\bm{\Delta}(\bm{h}))|_{\alpha=0}<0$. Thus, we may compose an arbitrary sequence $\{\bm{h}^{(k)}\}$, $k=0,1,2,\dots$ by Newton steps as follows:
\begin{equation}
\bm{h}\mapsto \bm{h}+\alpha\bm{\Delta}(\bm{h}), \quad \bm{\Delta}(\bm{h})=-\bm{J}_{\bm{F}_{\bm{z}}}^{-1}(\bm{h})\bm{F}_{\bm{z}}(\bm{h}),
\end{equation}
where $\bm{h}^{(0)}$ and $\alpha\in (0, 1]$ are chosen such that $|\bm{h}^{(k)}|<\bm{r}^{\#}$ and $\psi_{\bm{z}}(\bm{h}+\alpha\bm{\Delta}(\bm{h}))<\psi_{\bm{z}}(\bm{h})$ for all $k\ge0$. For simplicity, we can start the line search from $\alpha=1$, and iterate the backtracking by $\alpha\mapsto\frac{\alpha}{2}$ up to five times; if $|\bm{h}^{(k)}|<\bm{r}^{\#}$ and $\psi_{\bm{z}}(\bm{h}+\alpha\bm{\Delta}(\bm{h}))<\psi_{\bm{z}}(\bm{h})$ cannot be satisfied within five steps, we simply switch to a Picard step $\bm{h}\mapsto \bm{T}_{\bm{z}}(\bm{h})$. Because the sequence composed in this way converges to a unique solution for every $\bm{z}$ on $\Gamma_{\bm{z}}$, as discussed in Proposition 3.1, we simply design the algorithm to start from $\bm{h}^{(0)}=\bm{0}$. 

\section{Numerical approach to ensure the validity of applying the Lagrange--Good formula}
\label{supp:LG_validity}

As described in the main text, the change-of-variables $\,\bm z \mapsto \bm h\,$ involved in the Lagrange--Good Cauchy integral for the final size of multi-type GWPs is
\begin{equation}
\label{eq:z_of_h}
z_i \;=\; \frac{h_i}{G^{(i)}(\bm h)}, \qquad i=1,\dots,n,
\end{equation}
where $G^{(i)}(\bm h)$ is the offspring PGF of type $i$ evaluated at $\bm h$: 
\begin{equation}
\label{eq:Gi_NB_form}
G^{(i)}(\bm h)
=\mathrm{denom}_i(\bm h)^{-k},
\qquad
\mathrm{denom}_i(\bm h)
=1+\frac{1}{k}\sum_{j=1}^n K_{ij}\left(1- h_j\right).
\end{equation}
The Jacobian matrix that accompanies this change of variables is
\[
\bm J_{\bm z \rightarrow \bm h} (\bm h)\;=\; \left[ \frac{\partial z_i}{\partial h_j} \right], 
\]
where
\begin{equation}
\label{eq:J_factorization}
\frac{\partial z_i}{\partial h_j}
=\frac{1}{G^{(i)}(\bm h)}
\left(
\delta_{ij} - \frac{h_i}{G^{(i)}(\bm h)}\frac{\partial G^{(i)}(\bm h)}{\partial h_j}
\right).
\end{equation}
We may factorize $\bm J_{\bm z \rightarrow \bm h} (\bm h)$ as
\begin{equation}
\label{eq:J_core}
\bm J_{\bm z \rightarrow \bm h} (\bm h) \;=\; \mathrm{diag}\!\left(\frac{1}{G^{(1)}(\bm h)},\dots,\frac{1}{G^{(n)}(\bm h)}\right)\,
\bm A(\bm h),
\qquad
A_{ij}(\bm h) \;=\; \delta_{ij} - \frac{h_i}{G^{(i)}(\bm h)}\frac{\partial G^{(i)}(\bm h)}{\partial h_j}.
\end{equation}
Therefore,
\[
\det\left(\bm J_{\bm z \rightarrow \bm h} (\bm h)\right)\neq 0
\quad\Longleftrightarrow\quad
\left(\prod_{i=1}^n G^{(i)}(\bm h)\neq 0\right)\ \text{and}\ \det\left(\bm A(\bm h) \right)\neq 0.
\]
Thus, for the numerical calculation of Lagrange--Good Cauchy integrals, we need to choose an integration contour $\Gamma_{\bm h}$ such that this condition is satisfied.  In practice, extreme values of $|\det \left( \bm J(\bm h) \right)|$ may lead to unstable numerical evaluation. Therefore, with a two-type case as an example, we select the polyradii $\bm r=(r_1,r_2)$ of 
\[
\Gamma_{\bm h} \;=\; C_{h_1}(r_1)\times C_{h_2}(r_2),
\qquad
h_\ell=r_\ell e^{i\theta_\ell},\ \theta_\ell\in[0,2\pi),
\]
as follows. Let $\bm q=(q_1,q_2)$ be the extinction probability vector (smallest real solution of $\bm q=\bm G(\bm q)$). Starting from the initial contour radii given by
\begin{equation}
\label{eq:rad_init}
r_{\ell}^{(0)}=\min\{(1-\varepsilon_{\mathrm{rad}})q_\ell,\ r_{\max}\},
\qquad
\varepsilon_{\mathrm{rad}}=10^{-2},\ \ r_{\max}=0.95,
\end{equation}
we then numerically approximate
\[
m(\bm r)\;=\;\min_{\bm h\in \Gamma_{\bm h}} |\det \left( \bm A(\bm h) \right)|
\]
for parameters  $(t,\theta_1,\theta_2)$ with
\begin{equation}
\label{eq:h_param}
\bm{h} = (t\,r_1^{(0)}e^{i\theta_1},t\,r_2^{(0)}e^{i\theta_2}),
\qquad
t\in[10^{-2},1],\ \ \theta_1,\theta_2\in[0,2\pi],
\end{equation}
under the constraint 
\begin{equation}
|\mathrm{denom}_i(\bm h)|>10^{-5}.
\end{equation}
Given the resulting approximate minimum $\widehat{m}(\bm r)$, we accept $\bm r$ if $\widehat{m}(\bm r) > \tau_{\det}$, where $\tau_{\det}=10^{-2}$. If the condition fails, we simply shrink the radius vector multiplicatively:
\begin{equation}
\bm r \leftarrow 0.9 \times\,\bm r
\end{equation}
and repeat this process. The accepted radii are stored for each scenario.

\section{Simulation of multi-type cluster data}
\label{supp:sim_data}

We generated simulated cluster data from a two-type Galton--Watson branching process.
For a type-$i \in \{1,2\}$ infectious individual, the numbers of type-1 and type-2 secondary cases $(X_{i1},X_{i2})$ follow a negative multinomial offspring distribution with a common dispersion parameter $k>0$ and mean offspring matrix $\bm K=\left[ (K_{ij})\right]$. Equivalently, conditional on the total number of secondary cases $X_{i\cdot}=X_{i1}+X_{i2}$, we draw $X_{i\cdot}\sim\mathrm{NegBin}(\text{mean}=\sum_{j}K_{ij},\text{size}=k)$, and then allocate secondary cases to types via a multinomial distribution with probabilities $K_{ij}/\sum_{j}K_{ij}$.

For each scenario, we simulated $n_1=20$ independent clusters with index type 1 and $n_2=20$ independent clusters with index type 2. Each cluster was simulated generation-by-generation starting from a single index case of the specified type, and the final size was recorded as $(N_1,N_2)$ including the index case.
The simulation was stopped when the process became extinct (no infectious individuals remained). 


\section{MCMC convergence diagnostics}
\label{supp:sec:mcmc}

Figure \ref{fig:mcmc_trace} displays the trace plots and density estimates, Figure \ref{fig:prior_posterior} displays densities of prior and posterior distributions, and Figure \ref{fig:corr_plot} displays the correlation between model parameters. In all figures, parameters 1, 2, 3, 4 denote $\log R$, $\log k$, $\operatorname{logit}(S)$, $\operatorname{logit}(\pi)$, respectively.

\begin{figure}[t]
    \centering
    \includegraphics[width=\linewidth]{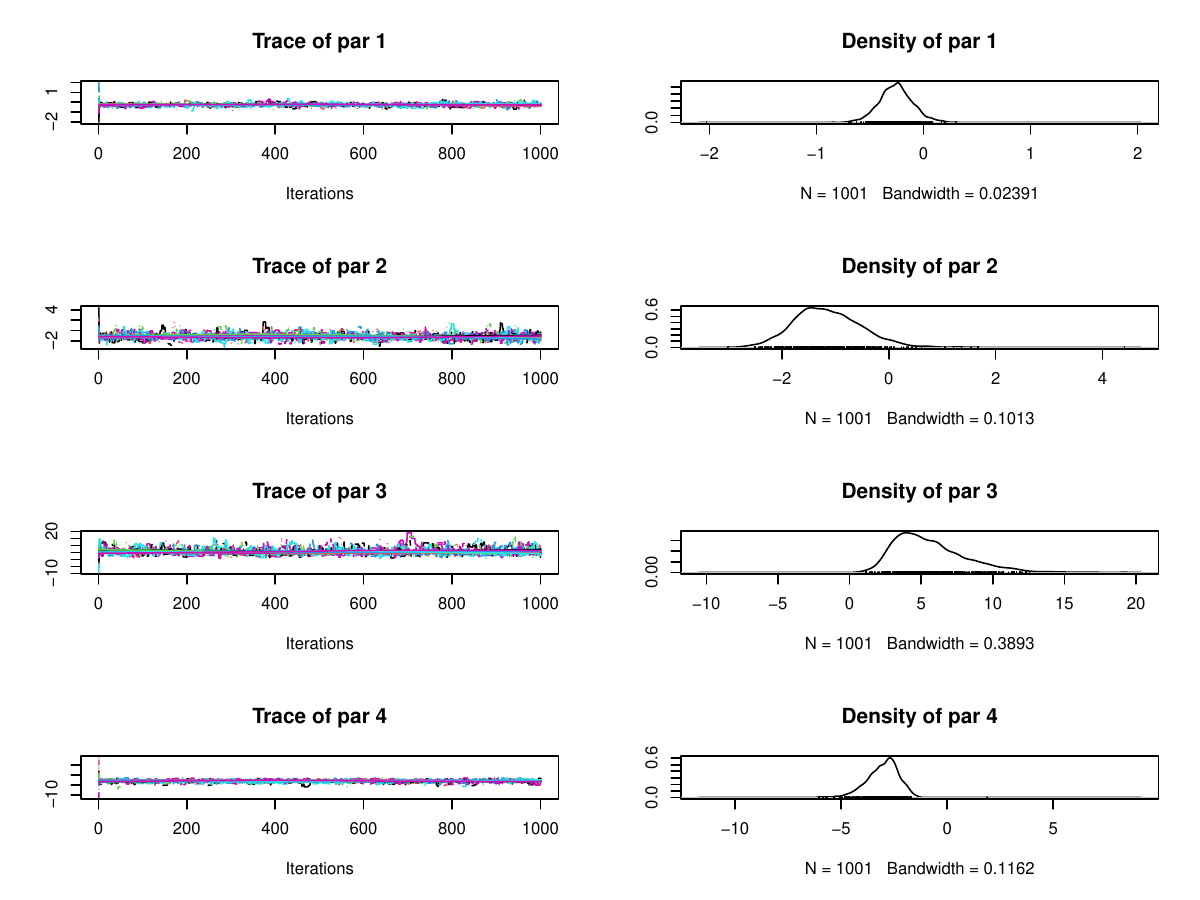}
    
    \caption{MCMC trace plots (left column) and marginal posterior density estimates (right column) for the model parameters.}
    \label{fig:mcmc_trace}
\end{figure}

\begin{figure}[t]
    \centering
    \includegraphics[width=\linewidth]{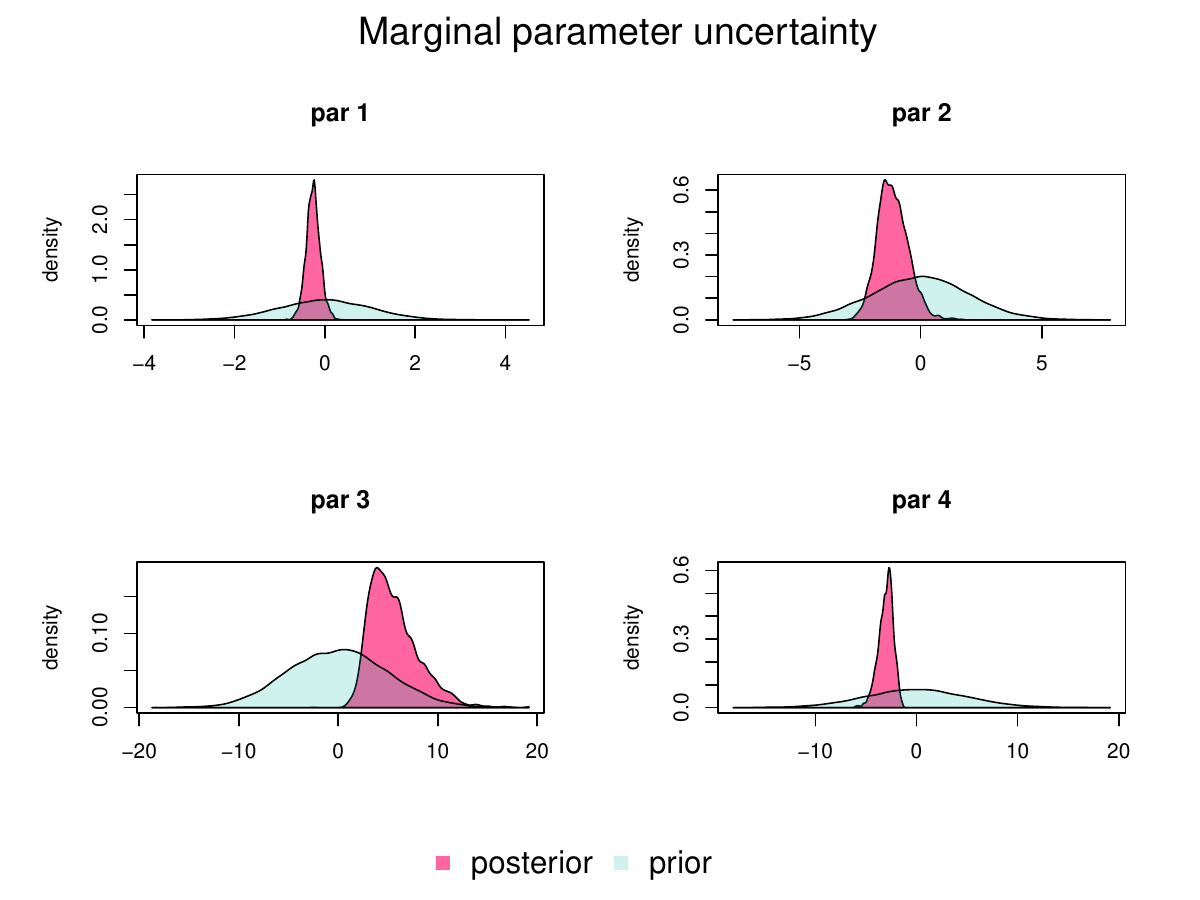}
    
    \caption{Overlay of prior and posterior densities for the model parameters. The light-blue shaded areas represent the prior distributions, whereas the red shaded areas represent the posterior distributions.}
    \label{fig:prior_posterior}
\end{figure}

\begin{figure}[t]
    \centering
    \includegraphics[width=\linewidth]{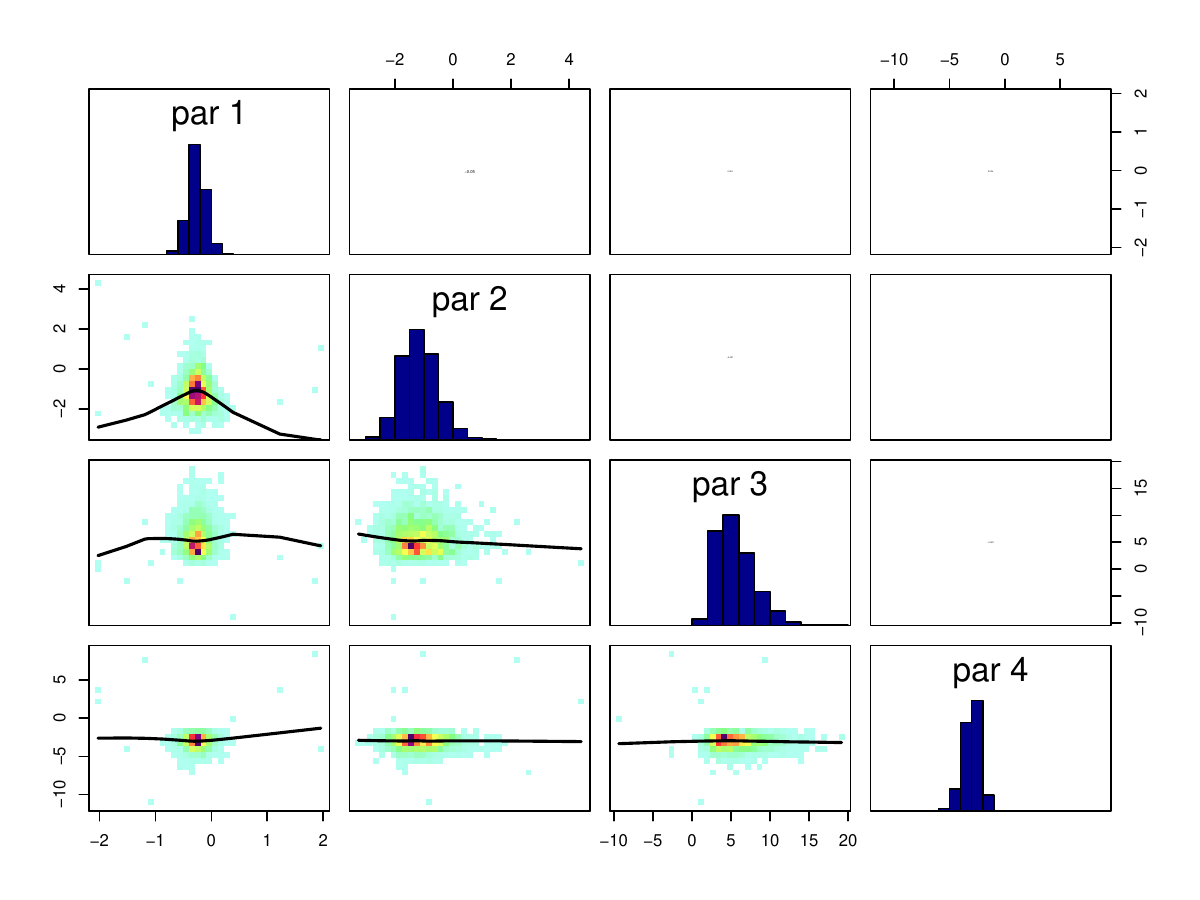}
    
    \caption{Posterior correlation plot for the MERS parameters. }
    \label{fig:corr_plot}
\end{figure}


\section{Implicit differentiation of extinction probabilities by parameters of the negative multinomial distributions}

Under the same assumptions on the offspring PGF as in Proposition 3.1, we start from the relationship $\bm{q} = \bm{G}(\bm{q})$ that the extinction probability vector $\bm{q} \in (0, 1)^n$ should satisfy. Here, $\bm{J}_{\bm{G}}(\bm{r})$ is assumed to have finite elements and is nonnegative and irreducible for $\bm r \in (0, 1]^n$.

\subsection{Differentiation by elements of the mean offspring matrix}
A small perturbation of the mean offspring matrix $\bm K= \left[ \frac{\partial G^{(i)}}{\partial z_j}\right]_{\bm z = \bm 1}$ causes a small change in $\bm q$. For simplicity, we consider a perturbation of an arbitrary $\ell$th column of the mean offspring matrix $\bm K$ (which is essentially a $\mathbb{R}^{n\times n}$ slice of the entire change in $q$) that lies in $\mathbb{R}^{n\times n \times n}$.  We denote this perturbation to $\bm{K_{\cdot \ell}} = (K_{1\ell}, \dots, K_{n\ell})^\top$, the $\ell$th column of $\bm K$, as $\bm{\Delta}_{\bm{K_{\cdot \ell}}} = (\Delta_{K_{1 \ell}}, \dots, \Delta_{K_{n\ell}})^T$. The resulting change to $\bm{q} = \bm{G}(\bm{q})$ can then be written as
\begin{equation}
 \left[  \frac{\partial q_i}{\partial K_{j\ell}}  \right]\bm{\Delta}_{\bm{K_{\cdot \ell}}}
 = \left( \left[  \frac{\partial G^{(i)}}{\partial K_{j\ell}} \right]_{\bm{z}=\bm{q}} 
 + \left[ \frac{\partial G^{(i)}}{\partial z_j} \right]_{\bm{z}=\bm{q}} \left[  \frac{\partial q_i}{\partial K_{j\ell}} \right] \right) \bm{\Delta}_{\bm{K_{\cdot \ell}}}, 
\end{equation}
which leads to 
\begin{equation}
\left( \bm{I} -  \left[ \frac{\partial G^{(i)}}{\partial z_j} \right]_{\bm{z}=\bm{q}} \right)  \left[   \frac{\partial q_i}{\partial K_{j\ell}}  \right]
 = \left[ \frac{\partial G^{(i)}}{\partial K_{j\ell}} \right]_{\bm{z}=\bm{q}} . 
\end{equation}
For $ \bm{J_G}(\bm{q}) =\left[ \frac{\partial G^{(i)}}{\partial z_j} \right]_{\bm{z}=\bm{q}}$, we know that $\rho( \bm{J_G}(q))<1$. Therefore, 
\begin{equation}
\left[  \frac{\partial q_i}{\partial K_{j\ell}} \right]
 = \left( \sum_{m=0}^\infty {\bm{J_G}(\bm{q})}^m  \right) \left[  \frac{\partial G^{(i)}}{\partial K_{j\ell}} \right]_{\bm{z}=\bm{q}}. 
\end{equation}
On the right-hand side, $\sum_{m=0}^\infty {\bm{J_G}(\bm{q})}^m$ is a strictly positive matrix because $\bm{J_G}(\bm{q})$ is irreducible. For $\left[  \frac{\partial G^{(i)}}{\partial K_{j\ell}} \right]_{\bm{z}=\bm{q}}$, we see that 
\begin{equation}
\left. \frac{\partial G^{(i)}}{\partial  K_{j\ell}} \right|_{\bm{z}=\bm{q}} =\begin{cases}
 - \left(1 -\ q_{\ell} \right)q_{i}^{1+1/k_i}<0  &  (i=j);\\
 0  &  (i \neq j).
\end{cases}
\end{equation}
In summary, $\left[ \frac{\partial q_i}{\partial K_{j\ell}}  \right]$ is a product of a positive matrix and a negative diagonal matrix. Therefore, it follows that $\frac{\partial q_i}{\partial K_{j\ell}}<0 $.

\subsection{Differentiation by overdispersion parameters}
Similarly, we may consider a small perturbation vector $\bm{\Delta}_{\bm{k}} = (\Delta_{k_1}, \dots, \Delta_{k_n})^\top$ added to $\bm{k}$. The change to $\bm{q} = \bm{G}(\bm{q})$ can be expressed as
\begin{equation}
 \left[ \frac{\partial q_i}{\partial k_j}  \right]\bm{\Delta}_{\bm{k}} 
 = \left( \left[  \frac{\partial G^{(i)}}{\partial k_j} \right]_{\bm{z}=\bm{q}} 
 + \left[ \frac{\partial G^{(i)}}{\partial z_j}  \right]_{\bm{z}=\bm{q}}\left[  \frac{\partial q_i}{\partial k_j} \right] \right) \bm{\Delta}_{\bm{k}}, 
\end{equation}
which leads to 
\begin{equation}
\left( \bm{I} -  \left[ \frac{\partial G^{(i)}}{\partial z_j}  \right]_{\bm{z}=\bm{q}} \right) \left[ \frac{\partial q_i}{\partial k_j}  \right]  
 = \left[  \frac{\partial G^{(i)}}{\partial k_j} \right]_{\bm{z}=\bm{q}}.  
\end{equation}
This can be rearranged as 
\begin{equation}
\left[  \frac{\partial q_i}{\partial k_j} \right] 
 = \left( \sum_{m=0}^\infty {\bm{J_G}(\bm{q})}^m  \right)\left[ \frac{\partial G^{(i)}}{\partial k_j} \right]_{\bm{z}=\bm{q}}. 
\end{equation}
Again, $\sum_{m=0}^\infty {\bm{J_G}(\bm{q})}^m$ is a strictly positive matrix because $ \bm{J_G}(\bm{q})$ is irreducible. For $\left[ \frac{\partial G^{(i)}}{\partial k_j} \right]_{\bm{z}=\bm{q}}$, $\frac{\partial G^{(i)}}{\partial k_j} =0$ when $i \neq j$. For $i = j$, 
\begin{equation}
\begin{split}
\left. \frac{\partial \log G^{(i)}}{\partial k_i} \right|_{\bm{z}=\bm{q}} 
&= -\log\left( 1 + \frac{1}{k_i} \sum_{m=1}^n K_{im}\left( 1 -  q_m \right) \right) + \frac{\frac{1}{k_i} \sum_{m=1}^n  K_{im}\left( 1 -  q_m \right) }{ 1 + \frac{1}{k_i} \sum_{m=1}^n K_{im}\left( 1 -  q_m \right)} \\
&=\log(q_i^{1/k_i}) + 1 - q_i^{1/k_i}.
\end{split}
\end{equation}
The final line is strictly negative ($f(x) = 1 - x + \log(x)$ is a strictly increasing function of $x \in (0, 1)$ and $\lim_{x \to 1-0} f(x) = 0$). Therefore,  $\left[  \frac{\partial G^{(i)}}{\partial k_j} \right]_{\bm{z}=\bm{q}}$ is a matrix with negative diagonal elements, and it follows that $\frac{\partial q_i}{\partial k_j} < 0$.

\begin{acks}[Acknowledgments]
Hiroshi Nishiura is also affiliated with the Center for Health Security, Graduate School of Medicine, Kyoto University.
\end{acks}

\bibliographystyle{imsart-nameyear} 
\bibliography{references}       